\documentclass{lmcs}
\pdfoutput=1

\usepackage{lastpage}
\lmcsdoi{21}{3}{12}
\lmcsheading{}{\pageref{LastPage}}{}{}%
{Nov.~14,~2024}{Jul.~29,~2025}{}

\keywords{Synthesis, Privacy, LTL, Automata, Complexity}

\usepackage{xspace,xcolor,dsfont,amssymb,amsthm}
\usepackage[hidelinks]{hyperref}
\usepackage[utf8]{inputenc}

\begin{document}
\title{Synthesis with Privacy Against an Observer}
\thanks{This research is supported by the Israel Science Foundation, Grant 2357/19, and the European Research Council, Advanced Grant ADVANSYNT}
\titlecomment{\lsuper*A preliminary version of this paper appears in the Proceedings of the 27th Foundations of Software Science and Computation Structures, FoSSaCS 2024.} 
\author[O.~Kupferman]{Orna Kupferman\lmcsorcid{0000-0003-4699-6117}}
\author[O.~Leshkowitz]{Ofer Leshkowitz\lmcsorcid{0000-0001-9225-2325}}
\author[N.~Shamash Halevy]{Namma Shamash Halevy\lmcsorcid{0009-0009-4736-6708}}
	
\address{School of Computer Science and Engineering, The Hebrew University of Jerusalem, Israel}
	
\newcommand{\stam}[1]{}
\newcommand{\sem}[1]{[\![#1]\!]} 
\newcommand{\zug}[1]{\langle #1 \rangle}
\newcommand{\set}[1]{{\left\{ #1 \right\}}}
\newcommand{\E}[1]{\mathbb{E}[ #1 ]}
\renewcommand{\P}{\mathcal{P}}
\newcommand{\floor}[1]{\left\lfloor #1 \right\rfloor}
\newcommand{\ceil}[1]{\left\lceil #1 \right\rceil}

\newcommand{\A}{\mathcal{A}}
\newcommand{\N}{\mathcal{N}}
\newcommand{\B}{\mathcal{B}}
\newcommand{\C}{\mathcal{C}}
\newcommand{\D}{\mathcal{D}}
\newcommand{\T}{\mathcal{T}}
\newcommand{\G}{\mathcal{G}}
\newcommand{\U}{\mathcal{U}}
\renewcommand{\H}{\mathcal{H}}
\newcommand{\V}{\mathcal{V}}
\newcommand{\spec}{\varphi}
\renewcommand{\sec}{\psi}
\newcommand{\trig}{\theta}
\newcommand{\req}{\mathsf{req}}
\newcommand{\grant}{\mathsf{grant}}
\newcommand{\wait}{\mathsf{wait}}
\newcommand{\change}{\mathsf{change}}

\renewcommand{\fill}{\mathsf{fill}}
\newcommand{\fillH}{\fill_\H}
\newcommand{\hide}{\mathsf{hide}}
\newcommand{\hideH}{\hide_\H}
\newcommand{\noise}[1]{\mathsf{noise}_{#1}}
\newcommand{\noiseH}{\noise{\H}}
\newcommand{\LTL}{LTL\xspace}
\newcommand{\cost}{\mathsf{cost}}
\renewcommand{\H}{{\mathcal H}}
\newcommand{\Nat}{{\mathbb N}}
\newcommand{\True}{\mathtt T}
\newcommand{\False}{\mathtt F}

\newcommand{\Next}{\mathsf{X}}
\newcommand{\Ev}{\mathsf{F}}
\newcommand{\Alw}{\mathsf{G}}
\newcommand{\Until}{\mathsf{U}}
\newcommand{\sep}{~ | ~}

\begin{abstract}
	We study automatic {\em synthesis\/} of systems that interact with their environment and maintain {\em privacy\/} against an observer to the interaction.  The system and the environment interact via sets $I$ and $O$ of input and output signals. The input to the synthesis problem contains, in addition to a specification, also a list of {\em secrets\/}, a function $\cost: I \cup O \rightarrow \Nat $, which maps each signal to the cost of hiding it, and a bound $b \in \Nat$ on the budget that the system may use for hiding of signals. The desired output is an $(I/O)$-transducer $\T$ and a set $\H \subseteq I \cup O$ of signals that respects the bound on the budget, thus $\sum_{s \in \H} \cost(s) \leq b$, such that for every possible interaction of $\T$, the generated computation satisfies the specification, yet an observer from which the signals in $\H$ are hidden, cannot evaluate the secrets.  
	
	We first show that the complexity of the problem is 2EXPTIME-complete for specifications and secrets in LTL, thus it is not harder than synthesis with no privacy requirements. We then analyze the complexity of the problem more carefully, isolating the two aspects that do not exist in traditional synthesis, namely the need to hide the value of the secrets and the need to choose the set $\H$. We do this by studying settings in which traditional synthesis can be solved in polynomial time -- when the specification formalism is deterministic automata and when the system is closed, and show that each of the two aspects involves an exponential blow-up in the complexity. We continue and study {\em bounded synthesis with privacy}, where the input also includes a bound on the size of the synthesized transducer, as well as a variant of the problem in which the observer has 
	{\em knowledge}, either about the specification or about the system, 
	which can be helpful in evaluating the secrets.
	In addition, we study {\em certified privacy}, where the synthesis algorithm also provides to the user a certification about the secrets being hidden. 

\end{abstract}

\maketitle
\section{Introduction}
\label{intro}

{\em Synthesis\/} is the automated construction of correct systems from their specifications \cite{BCJ18}.
While synthesized systems are correct, there is no guarantee about their {\em quality}.
Since designers will be willing to give up manual design only after being convinced that the automatic process replacing it generates systems of comparable quality, it is extremely important to develop and study quality measures for automatically-synthesized systems. 
An important quality measure is {\em privacy}: making sure that the system and its environment do not reveal information they prefer to keep private. Privacy is a vivid research area in Theoretical Computer Science.
There, the notion of {\em differential privacy\/} is used for formalizing when an algorithm maintains privacy. Essentially, an algorithm is differentially private if by observing its output, one cannot tell if a particular individual's information is used in the computation \cite{DN03,DMNS16}. Another related notion is {\em obfuscation} in system development, where we aim to  develop systems whose internal operation is hidden \cite{BGIRSVY12,GGH0SW16}. Obfuscation has been mainly studied in the context of software, where it has exciting connections with cryptography \cite{BGIRSVY12,GGH0SW16}.

In the setting of automated synthesis in formal methods,
a very basic notion of privacy has been studied by means of synthesis with {\em incomplete information}
\cite{Rei84,KV00a,CDHR06}. There, the system should satisfy its specification eventhough it only has a partial view of the environment.
Lifting differential privacy to formal methods, researchers have introduced the temporal logic {\em HyperLTL}, which extends \LTL with explicit trace quantification \cite{CFKMRS14}. Such a quantification can relate computations that differ only in non-observable elements, and can be used for specifying that computations with the same observable input have the same observable output. The synthesis problem of HyperLTL is undecidable, yet is decidable for the fragment with a single existential quantifier, which can specify interesting properties  \cite{FHLST20}.
In \cite{KL22}, the authors suggested a general framework for automated synthesis of privacy-preserving reactive systems. In their framework, the input to the synthesis problem includes, in addition to the specification, also {\em secrets}.  During its interaction with the environment, the system may keep private some of the assignments to the output signals, and it directs the environment which assignments to the input signals it should keep private. Consequently, the satisfaction value of the specification and secrets may become unknown. The goal is to synthesize a system that satisfies the specification yet keeps the value of the secrets unknown. Finally, lifting obfuscation to formal methods, researchers have studied the synthesis of obfuscation policies for temporal specifications. In \cite{WRRLS18}, an obfuscation mechanism is based on edit functions that alter the output of the system, aiming to make it impossible for an observer to distinguish between secret and non-secret behaviors. In \cite{JPH10}, the goal is to synthesize a control function that directs the user which actions to disable, so that the  observed sequence of actions would not disclose a secret behavior.

In this paper we continue to study privacy-preserving reactive synthesis. As in \cite{KL22}, our setting is based on augmenting the specification with secrets whose satisfaction value should remain unknown. Unlike \cite{KL22},
the system and the environment have complete information about the assignments to the input and output signals, and the goal is to hide the secrets from a third party, and to do so by hiding the assignment to some of the signals throughout the interaction. As an example, consider a system that directs a robot patrolling a warehouse storage. Typical specifications for the system require it to direct the robot so that it eventually reaches the shelves of requested items, it never runs out of energy, etc. An observer to the interaction between the system and the robot may infer properties we may want to keep private, like dependencies between customers and shelves visited, locations of battery docking stations, etc. If we want to prevent the observer from
inferring these properties (a.k.a,. the secrets), we have to hide the interaction from it. Different effort should be made in order to hide different components of the interaction (alarm sound, content of shelves, etc.). Our framework synthesizes a system that realizes the specification without the secrets being revealed, subject to restrictions on hiding of signals.  
As another example, consider a scheduler that should grant access to a joint resource. The scheduler should maintain mutual exclusion (grants are not given to different users simultaneously) and non-starvation (all requests are granted), while hiding details like waiting time or priority to specific users. In Examples~\ref{scheduler} and~\ref{robot}, we describe in detail the application of our framework for the synthesis of such a scheduler, as well as its application in the synthesis of a robot that paints parts of manufactured pieces. The robot should satisfy some requirements about the generated pattern of colors while hiding other features of the pattern.

Formally, we consider a reactive system that interacts with its environments via sets $I$ and $O$ of input and output signals. At each moment in time, the system reads a truth assignment, generated by the environment, to the signals in $I$, and it generates a truth assignment to the signals in $O$. The interaction between the system and its environment generates a {\em computation}. The system {\em realizes\/} a specification $\spec$ if all its computations satisfy $\spec$ \cite{PR89a}.  
We introduce and study the problem of {\em synthesis with privacy in the presence of an observer}. Given a specification $\spec$, and secrets $\sec_1,\ldots,\sec_k$ over $I \cup O$, our goal is to return, in addition to a system that realizes the specification $\spec$, also a set $\H \subseteq I \cup O$ of {\em hidden signals}, such that the satisfaction value of the secrets $\sec_1,\ldots,\sec_k$ is unknown to an observer that does not know the truth values of the signals in $\H$.
Thus, secrets are evaluated according to a {\em three-valued semantics}. 
The use of secrets enables us to hide {\em behaviors}, rather than just signals. \footnote{Hiding of signals is a special case of our framework. Specifically, hiding of a signal $p$ can be done with the secrets $Fp$ and $F\neg p$.}
Obviously, hiding all signals guarantees that the satisfaction value of every secret is unknown.
Hiding of signals, however, is not always possible or involves some cost. We formalize this by adding to the setting a function $\cost: I \cup O \rightarrow \Nat $, which maps each signal to the cost of hiding its value, and a bound $b \in \Nat$ on the budget that the system may use for hiding of signals. The set $\H$ of hidden signals has to respect the bound, thus $\sum_{s \in \H} \cost(s) \leq b$.

In some cases, it is desirable to hide the truth value of a secret only when some condition holds. For example, we may require to hide the content of selves only in some sections of the warehouse. We extend our framework to {\em conditional secrets}: pairs of the form $\zug{\trig,\sec}$, where the satisfaction value of the secret $\sec$ should be hidden from the observer only when the trigger $\trig$ holds.  In particular, when $\trig=\sec$, we require to hide the secret only when it holds. For example, we may require to hide an unfair scheduling policy only when it is applied. Note that a conditional secret $\zug{\trig,\sec}$ is not equivalent to a secret $\trig \rightarrow \sec$ or $\trig \rightarrow \neg \sec$, and that the synthesized system may violate the trigger, circumventing the need to hide the secret. For example, by synthesizing a fair scheduler, the designer circumvents the need to hide an unfair policy.

We show that synthesis with privacy is 2EXPTIME-complete for specifications and secrets in \LTL. Essentially, once the set $\H$ of hidden signals is determined, we can compose an automaton for the specification with automata that verify, for each secret, that the assignments to the signals in $(I \cup O) \setminus \H$ can be completed both in a way that satisfies the secret and in a way that does not satisfy it. A similar algorithm works for conditional secrets.

While the complexity of our algorithm is not higher than that of \LTL synthesis with no privacy, it would be misleading to conclude that handling of privacy involves no increase in the complexity. The 2EXPTIME complexity follows from the need to translate \LTL specifications to deterministic automata on infinite words. Such a translation involves a doubly-exponential blow-up \cite{KV05b,KR10}, which possibly dominates other computational tasks of the algorithm. In particular, two aspects of synthesis with privacy that do not exist in usual synthesis are a need to go over all possible choices of signals to hide, and a need to go over all assignments to the hidden signals. 

Our main technical contribution is a finer complexity analysis of the problem, which reveals that each of the two aspects above involves an exponential complexity: the first in the number of signals and the second in the size of the secret. We start with the need to go over all assignments of hidden signals and show that even when the specification is $\True$, the set $\H$ of hidden signals is given, and there is only one secret, given by a deterministic B\"uchi automaton, synthesis with privacy is EXPTIME-complete. This is exponentially higher than synthesis of deterministic B\"uchi automata, which can be solved in polynomial time. 
We continue to the need to go over all possible choices of $\H$. For that, we focus on the closed setting, namely when $I=\varnothing$, and the case the specification and secrets are given by deterministic automata. We show that while synthesis with privacy can be then solved in polynomial time for  a given set $\H$, it is NP-complete when $\H$ is not given, even when the function $\cost$ is uniform.

We continue and study three variants of the problem: {\em bounded synthesis\/}, {\em certified privacy}, and {\em knowledgeable observer}, briefly described below. 
One way to cope with the 2EXPTIME complexity of LTL synthesis, which is carried over to a doubly-exponential lower bound on the size of the generated system \cite{Ros92}, is bounded synthesis. There, the input to the problem includes also a bound on the size of the system \cite{SF07,Ehl10,KLVY11}. In a setting with no privacy, the bound reduces the complexity of LTL synthesis to PSPACE, as one can go over all candidate systems. We study bounded synthesis with privacy and show that privacy makes the problem much harder: it is EXPSPACE-complete when the specification and secrets are given by LTL formulas, and is PSPACE-complete when they are given by deterministic parity (or B\"uchi) automata. 

The second variant we consider is one in which in addition to hiding the secret, the system has to generate a {\em certificate for privacy}. The certificate is an assignment to the hidden signals that differs from the assignment generated during the interaction and with which the secret has a different satisfaction value. This way, the user gets a witness computation that proves privacy is preserved. We examine the difference from the standard setting and specifically show that the need to generate certificates in an online manner makes specification and secrets harder to realize. Then, we provide a solution for synthesis with certified privacy and show that it is 2EXPTIME-complete and can be solved Safralessly (i.e., without determinization of a nondeterministic B\"uchi automaton) \cite{KV05c}.

Finally, recall that a system keeps a secret $\sec$ private if an observer cannot reveal the truth value of $\sec$: every observable computation can be completed both to a computation that satisfies $\sec$ and to a computation that does not satisfy $\sec$. We study a setting in which the observer has additional knowledge. First, we consider the setting in which the observer knows the specification $\spec$ of the system. Consequently, the observer knows that only completions that satisfy $\spec$ should be taken into account. If, for example, $\spec\rightarrow\sec$, then $\sec$ cannot be kept private. We describe an algorithm for this variant of the problem and analyze the way knowledge of the specification affects the complexity. In particular, we show that the problem becomes EXPTIME-complete even when the specification is given by a deterministic B\"uchi automaton and the secrets are of a fixed size. Then, we consider the even more demanding setting in which the observer knows the structure of the system. Consequently, the observer knows the set of possible computations of the system and thus may restrict to them when trying to evaluate the secret. We describe a procedure for checking whether a given $(I/O)$-transducer $\T$ hides a given secret from an observer that knows $\T$. We leave the synthesis problem for this variant open and briefly explain why we suspect it is undecidable.  

\section{Preliminaries}
\label{sec:preliminaries}

\subsection{Synthesis}
For a finite nonempty alphabet $\Sigma$, an infinite {\em word\/} $w = \sigma_0 \cdot \sigma_1\cdot\ldots \in \Sigma^\omega$ is an infinite sequence of letters from $\Sigma$. 
A {\em language\/} $L\subseteq \Sigma^\omega$ is a set of infinite words. 

Let $I$ and $O$ be disjoint finite sets of input and output signals, respectively. We consider the alphabet $2^{I\cup O}$ of truth assignments to the signals in $I\cup O$. 
Then, a languages $L \subseteq (2^{I \cup O})^\omega$ can be viewed as a {\em specification}, and the {\em truth value\/} of $L$ in a computation $w \in (2^{I\cup O})^{\omega}$ is $\True$ if $w \in L$, and is $\False$ otherwise. 

An {\em $(I/O)$-transducer\/} is a tuple $\T=\zug{I,O,S,s_0,\eta,\tau}$, where $S$ is a finite set of states, $s_0\in S$ is an initial state, $\eta:S\times 2^I\rightarrow S$ is a transition function, and $\tau: S\rightarrow 2^O$ is a labeling function.
We extend the transition function $\eta$ to words in $(2^I)^*$ in the expected way, thus $\eta^*: S\times (2^I)^{*}\rightarrow S$ is such that 
for all $s\in S$, $x_I \in (2^I)^{*}$, and $i\in 2^I$,
we have that $\eta^*(s,\epsilon)=s$,
and $\eta^*(s,x_I\cdot i)=\eta(\eta^*(s,x_I),i)$.
For a word $w_I=i_0\cdot i_1\cdot i_2\cdot\ldots\in (2^I)^{\omega}$, we define the \emph{computation of ~$\T$ on $w_I$} to be the word $\T(w_I)=(i_0\cup o_0)\cdot (i_1\cup o_1)\cdot\ldots \in (2^{I\cup O})^{\omega}$, where for all $j\geq 0$, we have that $o_j=\tau(\eta^*(s_0, i_0\cdots i_j))$.
The {\em language\/} of $\T$, denoted $L(\T)$, is the set of computations of  $\T$, that is $L(\T)=\{\T(w_I) : w_I\in (2^I)^{\omega}\}$.

We say that $\T$ \emph{realizes} a language $L\subseteq (2^{I\cup O})^{\omega}$ if $L(\T)\subseteq L$.
We say that a language $L\subseteq (2^{I\cup O})^\omega$ is \emph{realizable} if there is an $(I/O)$-transducer that realizes it.
In the {\em synthesis\/} problem, we are given a specification language $L\subseteq (2^{I\cup O})^\omega$ and we have to  return an $(I/O)$-transducer that realizes $L$ or decide that $L$ is not realizable. The language $L$ is given by an automaton over the alphabet $2^{I\cup O}$ or a temporal logic formula over $I \cup O$ (see definitions in Section~\ref{aut and ltl}).

\subsection{Synthesis with privacy}

In the {\em synthesis with privacy\/} problem, we are given, in addition to the specification language $L_\spec\subseteq (2^{I\cup O})^{\omega}$, also a {\em secret\/}  $L_\sec\subseteq (2^{I\cup O})^{\omega}$, which defines a behavior that we want to hide from an observer\footnote{See Remark~\ref{rem ms} for an extension of the setting to multiple and conditional secrets.}. Thus, we seek an $(I/O)$-transducer that realizes $L_\spec$ without revealing the truth value of $L_\sec$ in the generated computations. Keeping the truth value of $L_\sec$ secret is done by hiding the truth value of some signals in $I \cup O$. Before we define synthesis with privacy formally, we first need some notations.

Consider a set $\H\subseteq {I\cup O}$ of \emph{hidden signals}. Let $\V=({I\cup O})\setminus \H$ denote the set of \emph{visible signals}.
For an assignment $\sigma \in 2^{I\cup O}$, let $\sigma|_\V\in 2^\V$ and $\sigma|_\H$ be the restriction of $\sigma$ to the visible and hidden signals respectively. Thus, $\sigma=\sigma|_\V\cup \sigma|_\H$. 
Then, for an infinite word $w=\sigma_0\cdot \sigma_1\cdot \sigma_2\cdots \in (2^{I\cup O})^\omega$, we denote by $w|_\V$ and $w|_\H$, the restriction of $w$ to $\V$ and $\H$ respectively. Thus, $w|_\V=\sigma_0|_\V\cdot \sigma_1|_\V\cdot \sigma_2|_\V\cdots \in (2^\V)^\omega$ and $w|_\H=\sigma_0|_\V\cdot \sigma_1|_\V\cdot h_2\cdots \in (2^\H)^\omega$. 
Then, let $\noiseH(w)$ be the set of all computations that differ from $w$ in assignments to the signals in $\H$. 
Thus, when the signals in $\H$ are hidden, then an observer of a computation $w\in (2^{I\cup O})^\omega$ only knows that the computation is one from $\noiseH(w)$.  Formally, $\noiseH(w)=\set{w'\in (2^{I\cup O})^\omega:w'|_\V=w|_\V}$.

Consider a specification $L_\spec\subseteq (2^{I\cup O})^{\omega}$ and a secret $L_\sec\subseteq (2^{I\cup O})^{\omega}$.
For a set $\H\subseteq I\cup O$ of hidden signals, we say that an $(I/O)$-transducer $\T$ \emph{$\H$-hides\/} $L_\sec$ if for all words $w_I\in (2^I)^{\omega}$, the truth value of the secret $L_\sec$ in the computation $\T(w_I)$ cannot be deduced from $\T(w_I)|_\V$.
Formally, for every $w_I\in (2^I)^\omega$, there exist two computations $w^+,w^-\in \noiseH(\T(w_I))$, such that $w^+\in L_\sec$ and $w^-\notin L_\sec$.
We say that $\T$ \emph{realizes $\zug{L_\spec,L_\sec,\H}$ with privacy} if $\T$ realizes $L_\spec$ and $\H$-hides $L_\sec$. We say that $\zug{L_\spec,L_\sec,\H}$ is \emph{realizable with privacy\/} if there exists an $(I/O)$-transducer that realizes $\zug{L_\spec,L_\sec,\H}$ with privacy. 

Clearly, hiding is monotone with respect to $\H$, in the sense that the larger $\H$ is, the more likely it is for an $(I/O)$-transducer  $\T$ to $\H$-hide $L_\sec$.
Indeed, if $\T$ $\H$-hides $L_\sec$, then $\T$ $\H'$-hides $L_\sec$ for all $\H'$ with $\H\subseteq \H'$. In particular, 
taking $\H=I \cup O$, we can hide all non-trivial secrets. Hiding of signals, however, is not always possible, and may sometimes involve a cost. Formally, we consider a {\em hiding cost function\/} $\cost:I\cup O\rightarrow \Nat$, which maps each signal to the cost of hiding it, and a {\em hiding budget\/} $b\in \Nat$, which bounds the cost that the system may use for hiding of signals.  The cost of hiding a set $\H\subseteq I\cup O$ of signals is then $\cost(\H)=\sum_{p\in \H}\cost(p)$, and we say that $\H$ {\em respects\/} $b$ if  $\cost(\H)\leq b$. Note that if $\cost(p) > b$, for $p \in I \cup O$, then $p$ cannot be hidden. Also, when $\cost(p)=1$ for all $p\in I\cup O$, we say that $\cost$ is {\em uniform}. Note that then, $b$ bounds the number of signals we may hide. 

Now, we say that 
$\zug{L_\spec,L_\sec,\cost,b}$ is \emph{realizable with privacy} if there exists a set $\H\subseteq I\cup O$ such that $\H$ respects $b$ and $\zug{L_\spec,L_\sec,\H}$ is realizable with privacy.
Finally, in the {\em synthesis with privacy\/} problem, we are given $L_\spec$, $L_\sec$, $\cost$, and $b$, and we have to return a set $\H\subseteq I\cup O$ that $\H$ respects $b$ and an $(I/O)$-transducer $\T$ that realizes $\zug{L_\spec,L_\sec,\H}$ with privacy, or determine that $\zug{L_\spec,L_\sec,\cost,b}$ is not realizable with privacy.

\subsection{Multiple and conditional secrets}
\label{rem ms}
In this section we discuss two natural extensions of our setting.
First, often we need to hide form the observer more than one secret. We extend the definition of synthesis with privacy to a set of secrets $S=\set{L_{\sec_1}, L_{\sec_2}, \ldots, L_{\sec_k}}$ in the natural way. Thus, an $(I/O)$-transducer $\T$ realizes $\zug{\spec,S,\H}$ with privacy if it realizes $\spec$ and $\H$-hides $L_{\sec_i}$, for all $i\in [k]$. Note that 

Then, a {\em conditional secret\/} is a pair $\zug{L_\trig,L_\sec}$, consisting of a {\em trigger\/} and a {\em secret}. The truth value of the secret should be unknown only in computations that satisfy the trigger. Formally, for a set $\H\subseteq I\cup O$ of hidden signals, we say that an $(I/O)$-transducer $\T$ \emph{$\H$-hides\/} $\zug{L_\trig,L_\sec}$ if for all input sequences $w_I\in (2^I)^{\omega}$ such that $\noiseH(\T(w_I)) \subseteq L_\trig$, the truth value of $L_\sec$ in the computation $\T(w_I)$ cannot be deduced from $\T(w_I)|_\V$, thus there exist two computations $w^+,w^-\in \noiseH(\T(w_I))$, such that $w^+\in L_\sec$ and $w^-\notin L_\sec$. 
A useful special case of conditional secrets is when the trigger and the secret coincide, and so we have to hide the truth value of the secret only if there are computations where the value of secret is $\True$. Formally, $\T$ $\H$-hides $\zug{L_\sec,L_\sec}$ if for all input sequences $w_I\in (2^I)^{\omega}$, there exists a computation $w^-\in \noiseH(\T(w_I))$ such that $w^-\notin L_\sec$. 

Note that unlike a collection of specifications, which can be conjuncted, hiding a set of secrets is not equivalent to hiding their conjunction. Likewise, hiding a conditional secret is not equivalent to hiding the implication of the secret by the trigger. Thus, the two variants require an extension of the solution for the case of a single or unconditional secret.  In Remark~\ref{rem ms sol}, we describe such an extension.

\subsection{Automata and LTL}
\label{aut and ltl}
An \emph{automaton} on infinite words is $\A = \langle \Sigma, Q, q_0, \delta, \alpha  \rangle$, where $\Sigma$ is an alphabet, $Q$ is a finite set of \emph{states}, $q_0\in Q$ is an \emph{initial state}, $\delta: Q\times \Sigma \to 2^Q $ is a  \emph{transition function}, and $\alpha$ is an \emph{acceptance condition}, to be defined below. 
For states $q,s \in Q$ and a letter $\sigma \in \Sigma$, we say that $s$ is a $\sigma$-successor of $q$ if $s \in \delta(q,\sigma)$. 
Note that we do not require the transition function to be \emph{total}. 
That is, we allow that $\delta(q,\sigma)=\varnothing$.
If $|\delta(q, \sigma)| \leq 1$ for every state $q\in Q$ and letter $\sigma \in \Sigma$, then $\A$ is \emph{deterministic}. For a deterministic automaton $\A$ we view $\delta$ as a function $\delta:Q\times \Sigma\rightarrow Q\cup\set{\bot}$, where $\bot$ is a distinguished symbol, and instead of writing $\delta(q,\sigma)=\set{q}$ and $\delta(q,\sigma)=\varnothing$, we write $\delta(q,\sigma)=q$ and $\delta(q,\sigma)=\bot$, respectively.

A \emph{run}  of $\A$ on $w = \sigma_0 \cdot \sigma_1 \cdots \in \Sigma^\omega$ is an infinite sequence of states $r = r_0\cdot r_1\cdot r_2\cdot \ldots \in Q^\omega$, such that $r_0 = q_0$, and for all $i \geq 0$, we have that $r_{i+1} \in \delta(r_i, \sigma_{i})$.  The acceptance condition $\alpha$ determines which runs are ``good''. We consider here the \emph{B\"uchi}, \emph{co-B\"uchi}, \emph{generalized B\"uchi} and {\em parity\/} acceptance conditions. All conditions refer to the set  ${\it inf}(r)\subseteq Q$ of states that $r$ traverses infinitely often. Formally, ${\it inf}(r) = \{  q \in Q: q = r_i \text{ for infinitely many $i$'s}   \}$. 
In generalized B\"uchi the acceptance condition is of the form $\alpha=\{\alpha_1,\alpha_2,\ldots,\alpha_k\}$, for $k \geq 1$ and sets $\alpha_i \subseteq Q$. In a generalized B\"uchi automaton, a run $r$ is accepting if for all $1 \leq i \leq k$, we have that  ${\it inf}(r)\cap \alpha_i \neq \varnothing$. Thus, $r$ visits each of the sets in $\alpha$ infinitely often. B\"uchi automata is a special case of its generalized form with $k=1$. That is, a run $r$ is accepting with respect to the B\"uchi condition $\alpha\subseteq Q$, if ${\it inf}(r)\cap \alpha\neq \varnothing$. Dually, in co-B\"uchi automata, a run $r$ is accepting if ${\it inf}(r)\cap \alpha = \varnothing$. Finally, in a parity automaton, the acceptance condition $\alpha:Q\to \{1,...,k\}$, for some $k \geq 1$, maps states to ranks, and a run $r$ is accepting if the maximal rank of a state in ${\it inf}(r)$ is even. Formally, $\max_{q \in {\it inf}(r)} \{\alpha(q)\}$ is even. A run that is not accepting is \emph{rejecting}. We refer to the number $k$ in $\alpha$ as the {\em index\/} of the automaton.
A word $w$ is accepted by $\A$ if there is an accepting run of $\A$ on $w$. The language of $\A$, denoted $L(\A)$, is the set of words that $\A$ accepts. Two automata are \emph{equivalent} if their languages are equivalent. 

We denote the different classes of automata by three-letter acronyms in $\{ \text{D,N} \} \times \{ \text{B,C,GB,P}\} \times \{\text{W}\}$. The first letter stands for the branching mode of the automaton (deterministic, nondeterministic); the second for the acceptance condition type (B\"uchi, co-B\"uchi, generalized B\"uchi, or parity); and the third indicates we consider automata on words. 
For example, NBWs are nondeterministic B\"uchi word automata. 

\LTL is a linear temporal logic used for specifying on-going behaviors of reactive systems \cite{Pnu81}. 
Specifying the behavior of $(I/O)$-transducers, formulas of \LTL are defined over the set $I \cup O$ of signals using the usual
Boolean operators and the temporal operators $G$ (``always") and $F$ (``eventually"), $X$ (``next time'') and
$U$ (``until''). The semantics of \LTL is defined with respect to infinite 
computations in $(2^{I \cup O})^\omega$. Thus, each \LTL formula $\spec$ over $I \cup O$ induces a language $L_\spec\subseteq (2^{I \cup O})^\omega$ of all computations that satisfy $\spec$. 

Recall that the input to the problem of synthesis with privacy includes languages $L_\spec$ and $L_\sec$. We sometimes replace $L_\spec$ and $L_\sec$ in the different notations with automata or \LTL formulas that describe them, thus talk about realizability with privacy of $\zug{\A_\spec,\A_\sec,\H}$ or $\zug{\spec,\sec,\H}$, for automata $\A_\spec$ and $\A_\sec$, or \LTL formulas $\spec$ and $\sec$. 

\begin{exa}
\label{scheduler}
Consider a scheduler that serves two users and grant them with access to a joint resource. 
The scheduler can be viewed as an open system with $I=\{\req_1,\req_2\}$, with $\req_i$ ($i \in \{1,2\}$) standing for a request form User~$i$, and $O=\{\grant_1,\grant_2\}$, with $\grant_i$ standing for a grant to User~$i$. The system should satisfy mutual exclusion and non-starvation. Formally, the specification for the system is $\spec_1 \wedge \spec_2 \wedge \spec_3$, for $\spec_1=G((\neg \grant_1) \vee (\neg \grant_2))$, $\spec_2=G(\req_1 \rightarrow F \grant_1)$, and $\spec_3=G(\req_2 \rightarrow F \grant_2)$. 

We may want to hide from an observer of the interaction the exact policy scheduling of the system. For example,\footnote{The LTL operator $W$ is ``weak Until'', thus $p_1Wp_2=(p_1Up_2) \vee Gp_1$.} the secret 
$\sec_1=((\neg \grant_1) W \req_1) \wedge G(\grant_1 \rightarrow X((\neg \grant_1) W \req_1))$ reveals whether the system gives User~$1$ grants only after requests that have not been granted yet. Indeed, $\sec_1$ specifies that once a grant to User~$1$ is given, no more grants are given to her, unless a new request from her arrives. A similar secret can be specified for User~$2$. Note that in order to hide $\sec_1$, it is sufficient to hide only one of the signals $\req_1$ or $\grant_1$. In fact, this is true even when the observer knows the specification for the system. 
Then, the secret
$\sec_2=G((\req_1 \rightarrow \grant_1 \vee X \grant_1)\wedge (\req_2 \rightarrow \grant_2 \vee X\grant_2))$ reveals whether delays in grants are limited to one cycle. Here, unlike with $\sec_2$, it is not sufficient hiding only a single request or even both. Indeed, some policies disclose the satisfaction value of $\sec_2$ even when requests are hidden. For example, a system that simply alternates between grants, thus outputs $\set{\grant_1},\set{\grant_2},\set{\grant_1},\set{\grant_2},\ldots$, satisfies the specification and clearly satisfies $\sec_2$ regardless of the users' requests.

Consider now the secret $\sec_3=FG(\req_1 \rightarrow \grant_1)$, which asserts that eventually, the requests of User~$1$ are always granted immediately. A system that satisfies $\sec_3$ is unfair to User~$2$. Aiming to hide this unfair behavior, we can use the conditional secret $\zug{\sec_3,\sec_3}$, which requires a system that satisfies $\sec_3$ to hide its satisfaction. 

Some computations that satisfy $\sec_3$, however, may still be fair to User~$2$. For example, if $\sec_3$ is satisfied vacuously or if only finitely many requests are sent from User~$2$, then the behavior specified in $\sec_3$ is fair, and we need not hide it. Accordingly, we can strengthen the trigger $\sec_3$ and restrict further the computations in which the satisfaction value of $\sec_3$ should be hidden. Formally, we replace the trigger $\sec_3$ by a trigger $\sec_3 \wedge \trig$, for a behavior $\trig$ in which a scheduling policy that satisfies $\sec_3$ is not fair (and hence, need to be hidden).

Let us consider possible behaviors $\trig$ for the conditional secret $\zug{\sec_3 \wedge \trig,\sec_3}$. As discussed above, behaviors that make $\sec_3$ unfair are $GF \req_1$, implying that $\sec_3$ is not satisfied vacuously, and $GF \req_2$, implying that the immediate grants to User~$1$ are not due to no requests from User~$2$. 
Taking $\trig=(GF \req_1) \wedge (GF \req_2)$ results in a more precise conditional secret. 

The trigger $\trig$ can be made more precise: taking $\trig=GF (\req_1 \wedge \req_2)$ still guarantees no vacuous satisfaction and also asserts that immediate grants to User~$1$ are given even when the requests of User~$1$ arrive together with those of User~$2$. In fact, $\trig=GF(\req_2 \wedge (\neg \grant_2) U \req _1))$ is even more precise, as it excludes the possibility that the requests of User~$1$ arrive before those of User~$2$. Note that the secret can be made less restrictive too, for example with $\sec'_3=FG(\req_1 \rightarrow ((\neg \grant_2) U \grant_1))$, which specifies that eventually, grants to User~$1$ are always given before grants to User~$2$. 
\hfill \qed
\end{exa}

\begin{exa}
\label{robot}
As a different example, consider a paint robot that paints parts of manufactured pieces. The set $O$ includes $3$ signals $c_0,c_1,c_2$ that encode $8$ colors. The encoding corresponds to the composition of the color from paint in three different containers. For example, color $101$ stands for the robot mixing paints from containers $0$ and $2$. The observer does not see the generated pattern, but, unless we hide it, may see the arm of the robot when it reaches a  container. Accordingly, hiding of signals in $O$ involve different costs. 

The user instructs the robot whether to stay with the current color or change it, thus $I=\{\change\}$. We seek a system that directs the robot which color to choose, in a way that satisfies requirements about the generated pattern. 
For example, in addition to the requirement to respect the changing instructions ($\xi_{\it respect}$), the specification $\spec$ may require the pattern to start with color $000$, and if there are infinitely many changes, then all colors are used ($\xi_{\it all}$), yet color $000$ repeats between each two colors ($\xi_{\it repeat}$). Formally, 
\begin{itemize}
\item
$\xi_{\it respect}=G((X\neg \change) \leftrightarrow ((c_0 \leftrightarrow X c_0) \wedge (c_1 \leftrightarrow X c_1) \wedge (c_2 \leftrightarrow X c_2)))$,
\item
$\xi_{\it all}=GF (\bar{c_0}\wedge \bar{c_1}\wedge \bar{c_2}) \wedge   GF (\bar{c_0}\wedge \bar{c_1}\wedge c_2) \wedge \cdots \wedge GF(c_0 \wedge c_1 \wedge c_2)$,
\item
$\xi_{\it repeat}= G((c_0 \vee c_1 \vee c_2) \rightarrow X (\change \rightarrow (\bar{c_0}\wedge \bar{c_1}\wedge \bar{c_2})))$, and
\item
$\spec=(\bar{c_0}\wedge \bar{c_1}\wedge \bar{c_2}) \wedge \xi_{\it respect} \wedge ((GF \change) \rightarrow (\xi_{\it all} \wedge \xi_{\it repeat}))$. 
\end{itemize}
We may want to hide from an observer certain patterns that the robot may produce. For example, the fact color $111$ is used only after color $110$ (with color $000$ between them), the fact there are colors other than $000$ that repeat without a color different from $000$ between them, and more. Note that not all the signals in $O$ need to be hidden, and that the choice of signals to hide depends on the secrets as well as the cost of hiding. 
\hfill \qed
\end{exa}

\section{Solving Synthesis with Privacy}
\label{solve syn priv}
In this section we describe a solution to the problem of synthesis with privacy for \LTL specifications and show that it is TIME-complete, thus not harder than \LTL synthesis. The solution is based on replacing the specification by one that guarantees the hiding of the secret. For this, we need the following two constructions.

\begin{lem}\label{noise nxw}
	Consider a nondeterministic automaton $\A=\zug{2^{I \cup O},Q,q_0,\delta,\alpha}$. Given a set $\H\subseteq I\cup O$, there is a transition function $\delta^{\H}:Q \times 2^{I \cup O} \rightarrow 2^Q$ such that the nondeterministic automaton 
	$\A^{\H}=\zug{2^{I \cup O},Q,q_0,\delta^{\H},\alpha}$ is such that $L(\A^\H)=\noiseH(L(\A))$.
\end{lem}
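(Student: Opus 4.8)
The plan is to leave the state space, the initial state, and the acceptance condition of $\A$ untouched, and to modify only the transition function so that, when reading a letter, the automaton commits to its visible part and nondeterministically guesses the hidden part. Concretely, I would define
$$\delta^\H(q,\sigma) = \bigcup_{\sigma' \in 2^{I \cup O} :\, \sigma'|_\V = \sigma|_\V} \delta(q,\sigma'),$$
so that $\delta^\H(q,\sigma)$ depends on $\sigma$ only through $\sigma|_\V$, and collects every $\A$-successor of $q$ reachable on some letter that agrees with $\sigma$ on the visible signals. The intuition is that $\A^\H$ should accept a word $w'$ exactly when there is a witness $w \in L(\A)$ with $w|_\V = w'|_\V$, and the guessing recovers the hidden signals of such a witness. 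Here I read $\noiseH(L(\A))$ as $\bigcup_{w \in L(\A)} \noiseH(w)$, i.e. the set of $w'$ for which such a witness $w$ exists.

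For correctness I would establish the two inclusions by a direct correspondence between runs, exploiting that $\A^\H$ and $\A$ share the same states and the same $\alpha$. For $L(\A^\H) \subseteq \noiseH(L(\A))$: given an accepting run $r = r_0 r_1 \cdots$ of $\A^\H$ on $w' = \sigma'_0 \sigma'_1 \cdots$, each step $r_{i+1} \in \delta^\H(r_i,\sigma'_i)$ yields, by definition of $\delta^\H$, a letter $\sigma_i$ with $\sigma_i|_\V = \sigma'_i|_\V$ and $r_{i+1} \in \delta(r_i,\sigma_i)$. Then $r$ is a legal run of $\A$ on $w = \sigma_0 \sigma_1 \cdots$; since the state sequence is unchanged, ${\it inf}(r)$ and hence acceptance are preserved, giving $w \in L(\A)$, while $w|_\V = w'|_\V$ yields $w' \in \noiseH(w)$. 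For the converse: given $w' \in \noiseH(L(\A))$, pick $w \in L(\A)$ with $w|_\V = w'|_\V$ and an accepting run $r$ of $\A$ on $w$; since each letter of $w$ agrees with the corresponding letter of $w'$ on $\V$, every $\A$-transition along $r$ is also a $\delta^\H$-transition on $w'$, so $r$ is an accepting run of $\A^\H$ on $w'$.

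I do not expect a genuine obstacle here: the construction is a form of existential cylindrification over the hidden coordinates, and the only point requiring care is the observation that keeping the states and the acceptance condition $\alpha$ fixed is precisely what makes the run correspondence preserve acceptance. Because acceptance is determined solely by ${\it inf}(r)$, and the correspondence above relates runs of the two automata with the identical state sequence, the argument goes through uniformly for every acceptance type in $\{\text{B, C, GB, P}\}$. The possible non-totality of $\delta$ (and hence of $\delta^\H$) causes no difficulty, as empty successor sets are simply carried along by the union.
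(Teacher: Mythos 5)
Your construction is identical to the paper's: the paper also defines $\delta^{\H}(q,\sigma)=\bigcup_{\sigma'\in\noiseH(\sigma)}\delta(q,\sigma')$, which is exactly your union over letters agreeing with $\sigma$ on the visible signals, and it keeps states, initial state, and acceptance condition unchanged. Your run-correspondence argument correctly fills in the details that the paper dismisses as ``easy to see,'' so the proposal is correct and takes essentially the same approach.
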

 
 \begin{proof}
	Intuitively, the transition function $\delta^{\H}$ increases the nondeterminism of $\delta$ by guessing an assignment to the signals in $\H$. Formally, for $q \in Q$ and $\sigma \in 2^{I \cup O}$, we define $\delta^{\H}(q,\sigma) = \bigcup_{\sigma' \in \noiseH(\sigma)} \delta(q,\sigma')$. It is easy to see that a word $w'$ is accepted by $\A^{\H}$ iff there is a word $w$ accepted by $\A$ such that $w' \in \noiseH(w)$. 
\end{proof} 
 
Note that while $\A^{\H}$ maintains the state space and acceptance condition of $\A$, it does not preserve determinism. Indeed, unless $\H=\varnothing$, we have that  $\A^{\H}$ is nondeterministic even when $\A$ is deterministic. Next, in \autoref{dpw} we construct automata that accept computations that satisfy the specification and hide the secret when a given set of signals is hidden. 

\begin{lem}\label{dpw}
	Consider two disjoint finite sets $I$ and $O$, a subset $\H\subseteq I\cup O$, and $\omega$-regular languages $L_\spec$ and $L_\sec$ over the alphabet $2^{I \cup O}$. There exists a DPW $\D^\H_{\spec,\sec}$ with alphabet $2^{I\cup O}$ that accepts a computation $w\in (2^{I\cup O})^\omega$ iff $w \in L_\spec$ and there exist $w^+,w^-\in \noiseH(w)$ such that $w^+ \in L_\sec$ and $w^-\not\in L_\sec$. 
	\begin{enumerate}
		\item
		If $L_\spec$ and $L_\sec$ are given by \LTL formulas $\spec$ and $\sec$, then $\D^\H_{\spec,\sec}$ has $2^{2^{O(|\spec|+|\sec|)}}$ states and index $2^{O(|\spec|+|\sec|)}$. 
		\item
		If $L_\spec$ and $L_\sec$ are given by DPWs $\D_\spec$ and $\D_\sec$ with $n_\spec$ and $n_\sec$ states, and of indices $k_\spec$ and $k_\sec$, then $\D^\H_{\spec,\sec}$ has $2^{O(n_\spec\cdot k_\spec\cdot (n_\sec\cdot k_\sec)^2 \log (n_\spec\cdot k_\spec\cdot n_\sec\cdot k_\sec) )}$ states and index $O(n_\spec\cdot k_\spec\cdot (n_\sec\cdot k_\sec)^2)$.
	\end{enumerate}
\end{lem}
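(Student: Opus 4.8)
The plan is to build $\D^\H_{\spec,\sec}$ as an intersection (product) of three deterministic automata, each capturing one of the three conjuncts in the acceptance condition: membership in $L_\spec$, existence of a completion $w^+\in\noiseH(w)$ that is in $L_\sec$, and existence of a completion $w^-\in\noiseH(w)$ that is outside $L_\sec$. First I would take automata for $L_\spec$, $L_\sec$, and the complement $\overline{L_\sec}$. For the ``there exists a completion'' conjuncts, the key observation is that the condition ``$\exists w^+\in\noiseH(w)$ with $w^+\in L_\sec$'' is precisely the statement $w\in\noiseH(L_\sec)$, since $\noiseH$ is symmetric on letters (two words agree on $\V$ iff each lies in the other's noise set). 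Thus I can apply \autoref{noise nxw} to a nondeterministic automaton for $L_\sec$ to get an NXW for $\noiseH(L_\sec)$, and likewise to an automaton for $\overline{L_\sec}$ to get an NXW for $\noiseH(\overline{L_\sec})$. The acceptance condition of the target then becomes $w\in L_\spec\cap\noiseH(L_\sec)\cap\noiseH(\overline{L_\sec})$, which is exactly the Lemma's requirement.

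The second step is determinization and intersection. The two noise automata produced by \autoref{noise nxw} are nondeterministic (as the remark after that lemma stresses, $\A^\H$ is nondeterministic whenever $\H\neq\varnothing$), so to obtain a \emph{deterministic} parity automaton I would determinize each of them to a DPW, keep the DPW for $L_\spec$ as is, and take their product under the intersection (the generalized-parity or, after a standard index transformation, plain parity) acceptance condition. The resulting DPW accepts $w$ iff all three component conditions hold, which gives the stated correctness.

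For the \textbf{size bounds}, the two cases diverge only in how the component automata are obtained. In case~(1), starting from \LTL formulas $\spec$ and $\sec$ of sizes $|\spec|$ and $|\sec|$, I first build NBWs of size $2^{O(|\spec|)}$ and $2^{O(|\sec|)}$; the complementary language $\overline{L_\sec}$ is handled simply by taking an NBW for $\neg\sec$, again of size $2^{O(|\sec|)}$. Applying \autoref{noise nxw} preserves the state space, and determinizing an NBW with $m$ states yields a DPW with $2^{O(m\log m)}$ states and index $O(m)$; substituting $m=2^{O(|\sec|)}$ gives a doubly-exponential state count $2^{2^{O(|\sec|)}}$ and a singly-exponential index $2^{O(|\sec|)}$, and the product with the (already doubly-exponential) determinized specification automaton yields the claimed $2^{2^{O(|\spec|+|\sec|)}}$ states and index $2^{O(|\spec|+|\sec|)}$. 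In case~(2), the inputs are already DPWs with $n_\spec,n_\sec$ states and indices $k_\spec,k_\sec$; the complement $\overline{L_\sec}$ is obtained for free by incrementing the parity ranks, so I keep $n_\sec$ states and $k_\sec$. Applying \autoref{noise nxw} keeps the state space but introduces nondeterminism, and determinizing a nondeterministic parity automaton with $n$ states and index $k$ costs $2^{O(nk\log(nk))}$ states and index $O(nk)$; plugging in $n_\sec,k_\sec$ and then taking the product with the specification DPW and tracking the generalized-to-parity index blowup produces the stated bounds.

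The main obstacle I expect is \textbf{bookkeeping the parity-index arithmetic} in the product construction, particularly in case~(2): intersecting parity automata does not directly yield a parity condition, so I must pass through a generalized condition and apply the standard conversion to parity, which is where the factor $(n_\sec\cdot k_\sec)^2$ and the logarithmic terms in the exponent arise. The conceptual content (the reduction of the ``exists a completion'' conjuncts to $\noiseH$ via \autoref{noise nxw}) is straightforward; the delicate part is verifying that the determinization and index-conversion costs compose to give exactly the stated $2^{O(n_\spec\cdot k_\spec\cdot (n_\sec\cdot k_\sec)^2\log(\cdots))}$ state bound and $O(n_\spec\cdot k_\spec\cdot (n_\sec\cdot k_\sec)^2)$ index, rather than merely a bound of the same asymptotic shape.
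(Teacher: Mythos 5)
Your proposal is correct and rests on the same decomposition as the paper: both reduce the two existential conjuncts to membership in $\noiseH(L_\sec)$ and $\noiseH(\overline{L_\sec})$ via Lemma~\ref{noise nxw} applied to automata for the secret and its complement, and then intersect with an automaton for $L_\spec$. Where you genuinely diverge is the order of operations, and the difference matters. The paper intersects at the \emph{nondeterministic} level -- it builds NGBWs (resp.\ NBWs) for the three languages, takes their cheap polynomial product, and determinizes exactly once at the end -- so it never has to intersect deterministic parity automata. You instead determinize each noise automaton separately and then intersect DPWs, which forces you through the conjunction-of-parity-conditions issue you flag as the delicate point: the conjunction is a Streett condition with $O(k_1+k_2+k_3)$ pairs, and converting a deterministic Streett automaton back to parity (via an index-appearance-record construction) multiplies the state space by a factor factorial in the number of pairs. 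This is exactly the exponential blow-up in DPW intersection that the paper cites and deliberately sidesteps (see Remark~\ref{when dbw}). Your route still proves the lemma, since the factorial factor is $2^{2^{O(|\spec|+|\sec|)}}$ in case~(1) and $2^{O((k_\spec+n_\sec k_\sec)\log(k_\spec+n_\sec k_\sec))}$ in case~(2), so your final bounds fall within the stated ones; in fact, in case~(2) your construction yields roughly $n_\spec\cdot 2^{O(n_\sec k_\sec\log(n_\sec k_\sec)+(k_\spec+n_\sec k_\sec)\log(k_\spec+n_\sec k_\sec))}$ states, which is only \emph{polynomial} in $n_\spec$ (and exponential only in $k_\spec$ and in $n_\sec k_\sec$), thereby recovering and generalizing the improvement of Remark~\ref{when dbw} for DBW specifications. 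One small correction to your bookkeeping: the factor $(n_\sec\cdot k_\sec)^2$ in the stated bound does not arise from parity-index arithmetic; in the paper it comes from the product of the two NBWs for $\noiseH(L_\sec)$ and $\noiseH(\overline{L_\sec})$ taken \emph{before} determinization, and in your construction no such quadratic factor appears at all.
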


\begin{proof}
	We start with the case $L_\spec$ and $L_\sec$ are given by \LTL formulas $\spec$ and $\sec$.
	Let $\A_\spec$, $\A_\sec$ and $\A_{\lnot \sec}$ be NGBWs for $L_\spec$, $L_\sec$, and $L_{\neg \sec}$. By \cite{VW94}, such NGBWs exist, and are of size exponential in the corresponding \LTL formulas. 
	Let $\A_\sec^\H$ and $\A_{\lnot\sec}^\H$ be the NGBWs for $\noiseH(L(\A_\sec))$ and $\noiseH(L(\A{\lnot\sec}))$, respectively, constructed as in Lemma~\ref{noise nxw}. 
		
	Now, let $\N^\H_{\spec,\sec}$ be an NGBW for the intersection of the three automata $\A_\spec$, $\A^\H_\sec$, and $\A^\H_{\lnot \sec}$. The NGBW $\N$ can be easily defined on top of the product of the three automata, and hence is of size $2^{O(|\spec|+|\sec|)}$ and index $O(|\spec|+|\sec|)$. Observe that indeed, a word $w\in (2^{I\cup O})^\omega$ is accepted by $\N^\H_{\spec,\sec}$ iff $w\models  \spec$ and there exist $w^+,w^-\in \noiseH(w)$ such that $w^+\models \sec$ and $w^-\not\models \sec$. By \cite{Saf88,Pit06}, determinizing $\N^\H_{\spec,\sec}$ results in a DPW $\D^\H_{\spec,\sec}$ with $2^{2^{O(|\spec|+|\sec|)}}$ states and index $2^{O(|\spec|+|\sec|)}$, and we are done.
		
	We continue with the case $L_\spec$ and $L_\sec$ are given by DPWs $\D_\spec$ and $\D_\sec$. We first obtain from $\D_\sec$ two NBWs, $\A_\sec$ and $\A_{\neg\sec}$ for $L_\sec$ and $L_{\neg\sec}=(2^{I\cup O})^\omega\setminus L_\sec$ respectively, and also we translate $\D_\spec$ into an equivalent NBW $\A_\spec$. Note that the NBWs $\A_\sec$ and $\A_{\neg\sec}$ can be defined with $O(n_\sec\cdot k_\sec)$ states, and that $\A_\spec$ can be defined with $O(n_\spec\cdot k_\spec)$ states. We then obtain the NBWs $\A_\sec^\H$ and $\A_{\lnot\sec}^\H$ by applying the construction in Lemma~\ref{noise nxw} on $\A_{\sec}$ and $\A_{\neg \sec}$, respectively. We then define an NBW of size $O(n_\spec\cdot k_\spec\cdot (n_\sec\cdot k_\sec)^2)$ for the intersection of the three NBWs $\A_\spec$, $\A_\sec^\H$, and $\A_{\lnot\sec}^\H$, and finally determinize it into a DPW $\D^\H_{\spec,\sec}$ with $2^{O(n_\spec\cdot k_\spec\cdot (n_\sec k_\sec)^2 \log (n_\spec\cdot k_\spec\cdot n_\sec k_\sec))}$ states and index $O(n_\spec\cdot k_\spec\cdot (n_\sec k_\sec)^2)$.
\end{proof}

\begin{rem}[The size of $\D^\H_{\spec,\sec}$ for specifications and secrets given by DBWs]\label{when dbw}
The exponential dependency of $\D^\H_{\spec,\sec}$ in the DPW $\D_\spec$  in the construction in Lemma~\ref{dpw} follows from the exponential blow up in DPW intersection \cite{Bok18}.
When $L_\spec$ is given by a DBW $\D_\spec$, we can first construct a DPW for the intersection of $\A_\sec^\H$ and $\A_{\lnot\sec}^\H$,
and only then take its intersection with $\D_\spec$. This results in a DPW $\D^\H_{\spec,\sec}$ of size exponential in $\D_\sec$, but only polynomial in $\D_\spec$. 
\end{rem}
	
We can now solve synthesis with privacy for LTL formulas.

\begin{thm}[Synthesis with privacy, LTL]\label{2exp ltl}
	Given two disjoint finite sets $I$ and $O$, LTL formulas $\spec$ and $\sec$ over $I\cup O$, a cost function $\cost:I\cup O\rightarrow \Nat$, and a budget $b\in \Nat$, deciding whether $\zug{\spec,\sec,\cost,b}$ is realizable with privacy is 2EXPTIME-complete. 
\end{thm}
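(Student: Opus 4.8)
The plan is to reduce synthesis with privacy, for each fixed choice of hidden signals, to ordinary synthesis against a single $\omega$-regular objective, and then to search over the finitely many admissible choices of $\H$. The engine is the DPW $\D^\H_{\spec,\sec}$ of Lemma~\ref{dpw}. The first step I would establish is the following characterization: for a fixed $\H\subseteq I\cup O$, an $(I/O)$-transducer $\T$ realizes $\zug{\spec,\sec,\H}$ with privacy if and only if $\T$ realizes the language $L(\D^\H_{\spec,\sec})$ in the ordinary sense. This is immediate from Lemma~\ref{dpw}: a computation $w$ is accepted by $\D^\H_{\spec,\sec}$ exactly when $w\in L_\spec$ and there are $w^+,w^-\in\noiseH(w)$ with $w^+\in L_\sec$ and $w^-\notin L_\sec$, which is precisely the per-computation requirement that $\T$ realize $\spec$ and $\H$-hide $\sec$. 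Hence $\T$ $\H$-realizes with privacy iff $L(\T)\subseteq L(\D^\H_{\spec,\sec})$.

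The upper bound then follows by iterating over all $\H\subseteq I\cup O$ with $\cost(\H)\le b$, of which there are at most $2^{|I\cup O|}$, and for each such $\H$ solving the ordinary synthesis problem for $\D^\H_{\spec,\sec}$ via the induced parity game; we answer positively the first time we find a realizable $\H$ (returning that $\H$ together with the transducer extracted from a winning strategy), and negatively otherwise. By Lemma~\ref{dpw}(1), $\D^\H_{\spec,\sec}$ has $2^{2^{O(|\spec|+|\sec|)}}$ states and index $2^{O(|\spec|+|\sec|)}$; solving a parity game of $N$ states and index $k$ costs $N^{O(k)}$, so each synthesis check costs $\bigl(2^{2^{O(|\spec|+|\sec|)}}\bigr)^{2^{O(|\spec|+|\sec|)}}=2^{2^{O(|\spec|+|\sec|)}}$, i.e.\ doubly exponential. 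Since the number of candidate sets $\H$ is only singly exponential in the input, the product over all of them remains doubly exponential, placing the problem in 2EXPTIME.

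For the matching lower bound I would reduce from ordinary LTL realizability, which is 2EXPTIME-hard \cite{PR89a,Ros92}. Given an LTL formula $\spec$ over $I\cup O$, pick a fresh output signal $p\notin I\cup O$ and form the instance $\zug{\spec,\Alw p,\cost,b}$ over $I\cup(O\cup\set{p})$, where $b=1$, $\cost(p)=1$, and $\cost(s)=2$ for every other signal. The only sets respecting the budget are $\varnothing$ and $\set{p}$; since $\Alw p$ is non-trivial it cannot be hidden by $\varnothing$, while hiding $p$ always suffices, because from any computation $w$ one obtains $w^+\in\noiseH(w)$ by forcing $p$ everywhere (so $w^+\models\Alw p$) and $w^-\in\noiseH(w)$ by dropping $p$ once (so $w^-\not\models\Alw p$). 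As $p$ does not occur in $\spec$, any transducer realizing $\spec$ extends, with an arbitrary labeling of $p$, to one that $\set{p}$-hides $\Alw p$; conversely, any privacy-realizing transducer realizes $\spec$ by definition. Thus the new instance is realizable with privacy iff $\spec$ is realizable, completing the reduction.

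The main obstacle is the correctness of the characterization in the first step — verifying that the three-valued hiding requirement, quantified over all input sequences and all noise completions, is captured exactly and per-computation by membership in $L(\D^\H_{\spec,\sec})$; once this is in place, everything after it is bookkeeping of the automata sizes from Lemma~\ref{dpw} and a routine parity-game complexity count. The only point requiring a little care in the lower bound is arranging the cost function so that $\set{p}$ is the unique useful choice of $\H$, which the budget trick above handles.
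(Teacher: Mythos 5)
Your proposal is correct and follows essentially the same route as the paper: the upper bound enumerates all budget-respecting sets $\H$ and solves ordinary synthesis for the DPW $\D^\H_{\spec,\sec}$ of Lemma~\ref{dpw}, and the lower bound reduces from LTL realizability by adding a fresh output signal $p$ and a secret that refers only to $p$, so that hiding $p$ makes privacy vacuous. The only cosmetic differences are that the paper uses the secret $p$ itself with $\cost(p)=0$ and budget $0$, whereas you use $\Alw p$ with a cost/budget arrangement that forces $\H\in\set{\varnothing,\set{p}}$; both arguments are sound.
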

\begin{proof}
	We start with the upper bound. Given $\spec$, $\sec$, $\cost$, and $b$, we go over all $\H\subseteq I\cup O$ such that $\cost(\H)\leq b$, construct the DPW $\D^{\H}_{\spec,\sec}$ defined in Lemma~\ref{dpw}, and check whether $L(\D^{\H}_{\spec,\sec})$ is realizable. Since realizability of a DPW with $n$ states and index $k$ can be solved in time at most $O(n^k)$ \cite{CJKLS17}, the 2EXPTIME upper bound follows from $\D^{\H}_{\spec,\sec}$ having $2^{2^{O(|\spec|+|\sec|)}}$ states and index $2^{O(|\spec|+|\sec|)}$.
	
	For the lower bound, we describe a reduction from \LTL synthesis with no privacy. Note that adding to a specification $\spec$ a secret $\True$ or $\False$ does not work, as an observer knows its satisfaction value. It is easy, however, to add a secret that is independent of the specification. Specifically, given a specification $\spec$ over $I\cup O$, let $O'=O\cup \set{p}$, where $p$ is a fresh signal not in $I\cup O$. Consider the secret $\sec=p$ and a cost function with $\cost(p)=0$. Clearly, an $(I/O)$-transducer $\T$ realizes $\spec$ iff the $(I/O')$-transducer $\T'$ that agrees with $\T$ and always assigns $\False$ to $p$, realizes $\spec$ and $\set{p}$-hides $\sec$. Conversely, for an $I/O'$-transducer $\T'$, let $\T$ be the $(I/O)$-transducer obtained from $\T$ by ignoring the assignments to $p$. Clearly, $\T'$ $\set{p}$-hides $\sec$. In addition, as $\spec$ does not refer to $p$, we have that $\T'$ realizes $\spec$ iff $\T$ realizes $\spec$. Thus, $\spec$ is realizable iff $\zug{\spec,\sec,\cost,0}$ is realizable with privacy.
\end{proof}

\begin{rem}[Solving privacy with multiple and conditional secrets]\label{rem ms sol}
Recall that for a set of secrets $S=\set{\sec_1,\sec_2,\ldots,\sec_k}$, an $(I/O)$-transducer $\T$ realizes $\zug{\spec,S,\H}$ with privacy if it realizes $\spec$ and $\H$-hides $\sec_i$, for all $i\in [k]$. It is easy to extend Theorem~\ref{2exp ltl} to the setting of multiple secrets by replacing the DPW $\D^\H_{\spec,\sec}$ by a DPW obtained by determinizing the product of $\A_\spec$ with automata $\A^\H_{\sec_i}$ and $\A^\H_{\lnot \sec_i}$, for all $1 \leq i \leq k$. 

As for conditional secrets, recall that a computation should satisfy the specification, and from the point of view of an observer, either the trigger is not triggered, thus $\pi\in L(\A^\H_{\neg\trig})$, or the secret is hidden, thus $\pi \in L(\A^\H_\sec)\cap L(\A^\H_{\neg\sec})$. Accordingly, we need to construct a deterministic automaton for $L(\A_\spec)\cap(L(\A^\H_{\neg\trig})\cup L(\A^\H_\sec)\cup L(\A^\H_{\neg\sec}))$. This can done by determinizing an NBW that is defined on top of the product of $\A_\spec,\A^\H_{\neg\trig},\A^\H_\sec$, and $\A^\H_{\neg\sec}$.

\end{rem}

While the complexity of our algorithm is not higher than that of \LTL synthesis with no privacy, it would be misleading to state that handling of privacy involves no increase in the complexity. Indeed, the algorithm involved two components whose complexity may have been dominated by the doubly exponential translation of the \LTL formulas to deterministic automata:
\begin{enumerate}
	\item
	A need to go over all candidate sets $\H \subseteq I \cup O$.
	\item
	A need to check that the generated transducer $\H$-hides the secret. 
\end{enumerate}

In the next two sections, we isolate these two components of synthesis with privacy and show that each of them involves an exponential complexity: the first in the number of signals and the second in the size of the secret. 

\subsection{Hiding secrets is hard}\label{hiding hard}

The synthesis problem for DBWs can be solved in polynomial time. Indeed, the problem can be reduced to solving a B\"uchi game played on top of the specification automaton. In this section we show that synthesis with privacy is EXPTIME hard even for a given set $\H$ of hidden signals (in fact, even a singleton set $\H \subseteq I$), a trivial specification, and a secret given by a DBW. 

We start by showing that $\H$-hiding is hard even for secrets given by DBWs. 

\begin{thm}\label{lb dbw hiding}
	Given two disjoint finite sets $I$ and $O$, a DBW $\D_\sec$ over $2^{I\cup O}$, and a set $\H\subseteq I\cup O$ of hidden signals, deciding whether there exists an $(I/O)$-transducer that $\H$-hides $\D_\sec$ is EXPTIME-hard.
	The problem is EXPTIME-hard already when $\H\subseteq I$.
\end{thm}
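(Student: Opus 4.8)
The plan is to reduce from a canonical EXPTIME-complete problem that naturally lives in an exponentially-large state space, namely the acceptance problem for a linearly-bounded alternating Turing machine (ATM), equivalently the membership/halting problem for an alternating polynomial-space machine. The reason an exponential lower bound is plausible here is that $\H$-hiding quantifies over \emph{all} completions of the visible projection: to verify that $\T$ hides $\D_\sec$, one must ensure that for every input word there exists a hidden-signal completion satisfying $\sec$ and one falsifying it. When $\H\subseteq I$, the transducer cannot see the hidden input signals, so the hidden signals act as an adversarially-chosen stream that the transducer must accommodate obliviously; this is exactly the kind of information asymmetry that lets us encode alternation.

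First I would fix the alphabet so that the hidden signals $\H\subseteq I$ carry the nondeterministic/existential choices of a computation (the ``certificate'' branch), while the visible signals of $I$ carry the universal/environment choices, and the output signals $O$ encode the transducer's simulation of the machine's configuration transitions. The DBW $\D_\sec$ is then designed to accept precisely those computations that encode a \emph{correct and accepting} run of the ATM on the fixed input. Crucially, because $\D_\sec$ is deterministic and reads the full assignment including $\H$, it can check transition-by-transition legality of the encoded run using only a polynomial number of states. The key semantic point I want to exploit is that a transducer $\H$-hides $\D_\sec$ iff for every visible behavior it produces, the hidden-signal stream can be completed \emph{both} into a run accepted by $\D_\sec$ and into one rejected by it. I would arrange that the ``$w^-\notin L_\sec$'' direction is always easy to achieve (e.g.\ a trivially malformed certificate always falsifies $\sec$), so that the binding constraint is the ``$w^+\in L_\sec$'' direction: against every universal/environment choice there must exist a certificate completion encoding an accepting run. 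This is exactly the alternating acceptance condition.

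The main obstacle, and where most of the care goes, is making the transducer's obliviousness to $\H$ do the work of the existential quantifier correctly. The transducer outputs configurations on $O$ while reading visible inputs, but it does \emph{not} see the hidden existential choices; yet we need the secret to be hidable iff the ATM accepts, which requires the existential choices to be made \emph{in response to} the universal ones along the run. The resolution is to let the hidden stream supply the existential choices and have $\D_\sec$ verify that the output configurations on $O$ are consistent with both the (visible) universal choices and the (hidden) existential choices read at each step; the transducer is then forced to produce, for each fixed visible input, an output that can be matched by \emph{some} hidden completion precisely when an accepting subtree exists. I would need to verify that the transducer cannot ``cheat'' by producing an output tailored to one hidden stream that fails to hide under another, which is guaranteed because hiding is a condition over all completions and the transducer reads only the visible projection. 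Finally I would confirm that $\D_\sec$ is genuinely a B\"uchi (not just parity) automaton of polynomial size — legality-of-transition checks are safety conditions and the accepting/halting condition is a reachability-to-a-sink condition, both expressible with a single B\"uchi set — and that the whole reduction, including fixing $\cost$ and $b$ trivially, is computable in polynomial time.
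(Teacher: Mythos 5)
There is a genuine gap at the heart of your construction: the claim that $\D_\sec$ ``can check transition-by-transition legality of the encoded run using only a polynomial number of states'' is false for an alternating polynomial-space (or linearly-bounded) machine. To verify that the configuration emitted at step $t+1$ is a legal successor of the configuration emitted at step $t$, a deterministic automaton must remember the entire previous configuration, which has polynomially many cells and hence exponentially many possible values; spreading each configuration over polynomially many letters does not help, since the automaton still has to retain one full configuration while reading the next. This is exactly the classical obstacle in EXPTIME-hardness proofs for games and realizability, and it is normally circumvented by additional machinery (e.g., the environment challenges a single tape cell, and the automaton tracks only a constant-size window plus a position counter), which your sketch does not include. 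Without that machinery the reduction is not polynomial-time; with it, you would essentially be re-proving EXPTIME-hardness of realizability from scratch and then layering the hiding argument on top.

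A second, smaller point: your premise that ``the transducer cannot see the hidden input signals'' misreads the model. In this paper, signals in $\H$ are hidden from the \emph{observer} only; the system and environment have complete information, and the transducer's transition function reads all of $2^I$, including hidden input signals. (As it happens, this does not invalidate your high-level logic, because the hiding requirement quantifies over all input words, so a transducer gains nothing from consulting the hidden inputs; but the ``information asymmetry'' you invoke is not the mechanism at work.) The paper's proof sidesteps both issues by reducing from NBW realizability, which is cited as EXPTIME-hard: the hidden input signals encode the \emph{states} of the given NBW $\A$, so that a completion of the visible computation is exactly a guessed run of $\A$, and the secret DBW merely checks, letter by letter, that each guessed state is a legal successor of the previous one and that the B\"uchi condition holds --- a purely local check requiring no more states than $\A$ itself (with the standing assumption that every word also has a rejecting run, so the $w^-$ completion always exists, mirroring your ``malformed certificate'' trick). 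Your ``certificate = hidden completion'' intuition is exactly right; the repair is to make the certificate a run of a nondeterministic automaton, whose legality is locally checkable, rather than an ATM computation, whose legality is not.
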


\begin{proof}
	We describe a polynomial-time reduction from NBW realizability, which is EXPTIME-hard~\cite{Rab72,HKKM02}.
	Given an NBW $\A$ over $2^{I\cup O}$, we define a set of signals $\H$ and a DBW $\D_\sec$ over $2^{I\cup O \cup \H}$, such that $L(\A)$ is realizable iff there exists an $((I \cup \H) / O)$-transducer that $\H$-hides $L(\D_\sec)$.
	
	Let $\A=\zug{2^{I\cup O}, Q, q_0, \delta, \alpha}$. W.l.o.g, we assume that $\A$ has a single initial state and  that every word in $(2^{I\cup O})^{\omega}$ has at least one rejecting run in $\A$. The latter can be achieved, for example, by adding a nondeterministic transition from the initial state to a rejecting sink upon any assignment $i\cup o\in 2^{I\cup O}$.
	Let $\H$ be a set of signals that encode $Q$.
	Thus, each assignment $s \in 2^\H$ is associated with a single state in $Q$.
	We refer to a letter in $2^{I \cup O \cup \H}$ as a pair $\zug{\sigma,q} \in 2^{I \cup O} \times Q$, and
	we view a word in $(2^{I \cup O \cup \H})^\omega$ as the combination $w \oplus r$, of a word  $w \in (2^{I \cup O})^\omega$ with a word $r \in Q^\omega$. Formally, for $w=\sigma_0 \cdot \sigma_1 \cdots \in (2^{I \cup O})^\omega$ and $r = r_1 \cdot r_2 \cdots \in Q^\omega$, let $w \oplus r = \zug{\sigma_0,r_1} \cdot \zug{\sigma_1,r_2} \cdots \in (2^{I \cup O \cup \H})^\omega$. Then, 
	we define $\D_\psi$ so that
	$L(\D_\psi)=\{w \oplus r\in (2^{I \cup O \cup \H})^\omega :  \mbox{the sequence $q_0 \cdot r$ is an accepting run of $\A$ on } w\}$.
	Note that since every word in $(2^{I\cup O})^{\omega}$ has at least one rejecting run in $\A$, then every word $w \in L(\A)$ has at least one word $r^+ \in Q^\omega$ such that $w \oplus r^+ \in L(\D_\psi)$ and at least one word $r^- \in Q^\omega$ such that $w \oplus r^- \not \in L(\D_\psi)$.
	
	Formally, $\D_\psi=\zug{2^{I \cup O \cup \H},Q,q_0,\delta',\alpha}$ has the same state space and acceptance condition as $\A$, and it uses the $Q$-component of each letter in order to resolve the nondeterministic choices in $\A$. Thus, the transitions function $\delta':Q \times 2^{I \cup O\cup \H} \rightarrow Q$ is defined as follows. For every state $q \in Q$ and letter $\zug{\sigma,s} \in 2^{I \cup O \cup \H}$, we have that $\delta'(q,\zug{\sigma,s})=s$ if $s\in \delta(q,\sigma)$, and otherwise $\delta'(q,\zug{\sigma,s})=\bot$.

	We prove that indeed $L(\D_\psi)$ accepts exactly all words $w \oplus r$ such that $q_0 \cdot r$ is an accepting run of $\A$ on $w$. 
	By definition of $\delta'$, it holds that $r'$ is a run of $\D_\sec$ over $w\oplus r$ iff $r'=q_0\cdot r$, and $q_0\cdot r$ is a run of $\A$ over $w$. Hence, a run $r'=q_0\cdot r$ of $\D_\sec$ over $w\oplus r$ is accepting, iff ${\it inf}(r')\cap \alpha\neq \varnothing$, iff $r'=q_0\cdot r$ is an accepting run of $\A$ over $w$, and we are done. 
\end{proof}

Note that in the proof of Theorem~\ref{lb dbw hiding}, we could have defined $\H$ so that it resolves the nondeterminism in $\A$ in a more concise way. In particular, if we assume that the nondeterminism degree in $\A$ is at most $2$, then a set $\H$ of size $1$ can resolve the nondeterminizm of $\delta$. Hence, as NBW synthesis is EXPTIME-hard already for NBWs with branching degree~$2$ (this follows from the fact that a bigger branching degree can be decomposed along several transitions), EXPTIME hardness holds already when hiding a single input signal. 

\begin{thm}[Synthesis with privacy, DPWs]\label{exp dbw}
	Given two disjoint finite sets $I$ and $O$, DPWs $\D_\spec$ and $\D_\sec$ over $2^{I\cup O}$, and a set $\H\subseteq I\cup O$ of hidden signals, deciding whether $\zug{\D_\spec,\D_\sec,\H}$ is realizable with privacy is EXPTIME-complete. Moreover, hardness holds already when the specification is trivial and the secret is given by a DBW.
\end{thm}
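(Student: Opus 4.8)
The plan is to place the problem in EXPTIME by reducing realizability with privacy to plain DPW realizability via the automaton $\D^\H_{\spec,\sec}$ of Lemma~\ref{dpw}, and to derive EXPTIME-hardness directly from Theorem~\ref{lb dbw hiding}.

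For the upper bound, I would first isolate the equivalence that drives the whole construction: by the definition of $\H$-hiding together with the defining property of $\D^\H_{\spec,\sec}$, a transducer $\T$ realizes $\zug{\D_\spec,\D_\sec,\H}$ with privacy iff every computation $\T(w_I)$ lies in $L_\spec$ and admits $w^+,w^-\in\noiseH(\T(w_I))$ with $w^+\in L_\sec$ and $w^-\notin L_\sec$; that is, iff $L(\T)\subseteq L(\D^\H_{\spec,\sec})$. Hence $\zug{\D_\spec,\D_\sec,\H}$ is realizable with privacy iff $L(\D^\H_{\spec,\sec})$ is realizable in the ordinary sense, and the algorithm is simply to build $\D^\H_{\spec,\sec}$ and solve its realizability. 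By item~(2) of Lemma~\ref{dpw}, $\D^\H_{\spec,\sec}$ has a state space $n$ that is exponential in the input but an index $k=O(n_\spec\cdot k_\spec\cdot(n_\sec\cdot k_\sec)^2)$ that is only polynomial in it.

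I would then invoke that realizability of a DPW with $n$ states and index $k$ is solvable in time $O(n^k)$~\cite{CJKLS17}. Substituting an exponential $n$ and a polynomial $k$ yields an $n^k$ that is still singly exponential in the input, so the whole procedure runs in EXPTIME. The key bookkeeping point---and essentially the only place requiring care---is that the index stays polynomial: raising an exponential-sized automaton to a polynomial power keeps us within a single exponential, whereas an exponential index would push us to doubly exponential time.

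For the lower bound, I would take the specification to be the trivial single-state DPW accepting $(2^{I\cup O})^\omega$, so that $L_\spec$ is realized vacuously by every transducer. Then realizability with privacy of $\zug{\D_\spec,\D_\sec,\H}$ coincides exactly with the existence of an $(I/O)$-transducer that $\H$-hides $\D_\sec$, which Theorem~\ref{lb dbw hiding} shows to be EXPTIME-hard already for a DBW secret and $\H\subseteq I$. This transfers the hardness and simultaneously yields the ``moreover'' claim. There is thus no real conceptual obstacle here: both directions reduce to results established earlier in the excerpt, and the substantive work lies entirely in the complexity accounting of the upper bound.
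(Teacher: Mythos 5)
Your proposal is correct and follows essentially the same route as the paper: the upper bound builds $\D^\H_{\spec,\sec}$ from Lemma~\ref{dpw} and solves ordinary DPW realizability for it, observing that the exponential state space combined with the merely polynomial index keeps the $n^k$-time algorithm within a single exponential, and the lower bound instantiates the specification as the trivial automaton $\D_\True$ and invokes Theorem~\ref{lb dbw hiding}. The only difference is presentational: you spell out explicitly the equivalence ``$\T$ realizes $\zug{\D_\spec,\D_\sec,\H}$ with privacy iff $L(\T)\subseteq L(\D^\H_{\spec,\sec})$,'' which the paper leaves implicit.
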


\begin{proof}
	For the upper bound, we solve the synthesis problem for the DPW $\D^{\H}_{\spec,\sec}$ defined in Lemma~\ref{dpw}. 
	As specified there, the size of $\D^{\H}_{\spec,\sec}$ is exponential in the size and index of both $\D_\spec$ and $\D_\sec$, and its index is polynomial in the size and index of $\D_\spec$ and $\D_\sec$. Membership in EXPTIME then follows from the complexity of the synthesis problem for DPWs \cite{BCJ18}.
	
	For the lower bound, fix a DBW $\D_\True$ such that $L(\D_\True)=(2^{I\cup O})^\omega$. Then, it is easy to see that $\zug{\D_\True,\D_\sec,\H}$ is realizable with privacy iff there is an $(I/O)$-transducer that $\H$-hides $\D_\sec$. Thus, hardness in EXPTIME follows from Theorem~\ref{lb dbw hiding}.
\end{proof}

\subsection{Searching for a set of signals to hide is hard}

Another component in the algorithm that is dominated by the doubly-exponential translation of \LTL to DPWs is the need to go over all subsets of $I \cup O$ in a search for the set $\H$ of signals to hide. Trying to isolate the influence of this search, it is not enough to consider specifications and secrets that are given by DBWs, as synthesis with privacy is EXPTIME-hard already for a given set $\H$, and so again, the complexity of the search is dominated by the complexity of the synthesis problem. Fixing the size of the secret, which is the source of the exponential complexity, does not work either, as it also fixes the number of signals that we may need to hide. We address this challenge by moving to an even simpler setting for the problem, namely synthesis with privacy of a {\em closed\/} system. We are going to show that in this setting, the search for $\H$ is the only non-polynomial component in the algorithm.

In the closed setting, all signals are controlled by the system, namely $I=\varnothing$. 
Consequently, each transducer has a single computation,  and realizability coincides with satisfiability.
In particular, for $I=\varnothing$, we have that $\zug{L_\spec,L_\sec,\H}$ is realizable with privacy iff there exists a word $w\in L_\spec$, for which there exist two words $w^+,w^-\in\noiseH(w)$ such that $w^+\in L_\sec$ and $w^-\notin L_\sec$. We show that while synthesis with privacy in the closed setting can be  solved in polynomial time for  a given set $\H$ of hidden signals, it is NP-complete when $\H$ is not given, even when the function ${\it cost}$ is uniform.

We start with the case $\H$ is given.
\begin{thm}\label{ptime}
	Given a finite set $O$ of output signals, a set $\H\subseteq O$ of hidden signals, and DPWs $\D_{\spec}$ and $\D_\sec$ over $2^O$, deciding whether $\zug{\D_\spec,\D_\sec,\H}$ is realizable with privacy can be done in polynomial time.
\end{thm}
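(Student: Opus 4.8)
The plan is to reduce realizability with privacy in the closed setting to a nonemptiness question about a product automaton, and to solve that question in polynomial time. Recall the characterization stated above: for $I=\varnothing$, the triple $\zug{\D_\spec,\D_\sec,\H}$ is realizable with privacy iff there is a word $w\in L_\spec$ together with $w^+,w^-\in\noiseH(w)$ such that $w^+\in L_\sec$ and $w^-\notin L_\sec$. Since $w^+\in\noiseH(w)$ (resp.\ $w^-\in\noiseH(w)$) constrains only the visible projection, this says exactly that $w$ itself lies in $\noiseH(L_\sec)$ and in $\noiseH(L_{\neg\sec})$. Hence the whole problem reduces to deciding whether \[ L(\D_\spec)\cap\noiseH(L(\D_\sec))\cap\noiseH(L(\D_{\neg\sec}))\neq\varnothing, \] where $\D_{\neg\sec}$ is a DPW for the complement of $L_\sec$.

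First I would assemble the three ingredient automata over $2^O$, all of size polynomial in $\D_\spec$ and $\D_\sec$. The automaton $\D_\spec$ is given; complementing the DPW $\D_\sec$ into $\D_{\neg\sec}$ is immediate (raise every rank by one), preserving the state space and increasing the index by at most one. Applying the noise construction of Lemma~\ref{noise nxw} to $\D_\sec$ and to $\D_{\neg\sec}$ then produces nondeterministic parity automata $\D_\sec^\H$ and $\D_{\neg\sec}^\H$ for $\noiseH(L(\D_\sec))$ and $\noiseH(L(\D_{\neg\sec}))$, each retaining the state set and index of the automaton it is built from. The construction destroys determinism but not the state space, so all three automata remain polynomial.

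Next I would form their product, running all three on the same input $w\in(2^O)^\omega$, where the $\D_\sec^\H$-component nondeterministically guesses a completion $w^+$ and the $\D_{\neg\sec}^\H$-component independently guesses a completion $w^-$. This is a nondeterministic automaton with state space $Q_\spec\times Q_\sec\times Q_\sec$ whose acceptance is the conjunction of the three component parity conditions; a word is accepted iff it witnesses the displayed nonemptiness condition. A lasso witness immediately yields an ultimately periodic $w$, and thus (as $I=\varnothing$) a one-computation transducer realizing the triple with privacy, so deciding nonemptiness settles the theorem.

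The step that needs care is the combined acceptance, since a conjunction of parity conditions is not itself a parity condition. Here I would use that a single parity condition is in particular a Streett condition, so the conjunction of the three is again a Streett condition, obtained by taking the union of the underlying pair sets and contributing only $O(k_\spec+k_\sec)$ pairs. Emptiness of a nondeterministic Streett automaton is solvable in polynomial time in the numbers of states and pairs, via SCC decomposition together with the standard recursive elimination of states that cannot be visited infinitely often. Running this procedure on the product decides the nonemptiness question, and hence realizability with privacy, in polynomial time. I expect this handling of the combined acceptance condition to be the only delicate point; the noise construction, the DPW complementation, and the product are all polynomial by inspection.
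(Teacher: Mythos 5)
Your proposal is correct, and its skeleton coincides with the paper's: complement $\D_\sec$, lift the secret automaton and its complement with the noise construction of Lemma~\ref{noise nxw}, intersect with $\D_\spec$, and use the closed-setting characterization to reduce realizability with privacy to nonemptiness of the product. The difference lies in how the acceptance conditions are handled. The paper first translates all three DPWs into equivalent NBWs (a polynomial step, with $O(nk)$ states per automaton), so that the noise construction, the triple intersection, and the final emptiness test are all standard B\"uchi constructions; your route keeps the parity conditions, views each lifted parity condition as a Streett condition, uses closure of Streett conditions under conjunction (union of the pair sets), and invokes polynomial-time emptiness of nondeterministic Streett automata. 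Both are sound and polynomial: the paper's detour through B\"uchi is more elementary, needing nothing beyond NBW product and lasso reachability, whereas yours avoids the DPW-to-NBW translation at the price of the heavier Streett emptiness machinery. One point where your write-up is more careful than the paper's: you note explicitly that a lasso witness for nonemptiness yields an ultimately periodic word, and hence a finite-state transducer whose unique computation is that word, which is what realizability with privacy formally requires in the closed setting; the paper leaves this last step implicit.
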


\begin{proof}
	First, we complement $\D_{\sec}$, which results in a DPW $\D_{\neg\sec}$ of the same size, and of index $k+1$, where $k$ is the index of $\D_\sec$. Then, we
	translate $\D_\spec$, $\D_\sec$ and $\D_{\neg\sec}$ into equivalent NBWs $\A_\spec$, $\A_{\sec}$ and $\A_{\neg\sec}$, respectively. All three NBWs can be defined in size that is polynomial in their deterministic DPW counterpart.
	Let $\A_\sec^\H$ and $\A_{\lnot\sec}^\H$ be NBWs obtained by applying the construction in Lemma~\ref{noise nxw} on $\A_{\sec}$ and $\A_{\neg \sec}$, respectively. 
	By Lemma~\ref{noise nxw}, the NBWs $\A_\sec^\H$ and $\A_{\lnot\sec}^\H$ have the same number of states as $\A_{\sec}$ and $\A_{\neg \sec}$ respectively.
	Let $\N$ be an NBW for the intersection $L(\A_\spec)\cap L(\A_\sec^{\H})\cap L(\A_{\lnot \sec}^{\H})$. 
	Note that $\N$ can be defined with size that is polynomial in $\A_\spec$, $\A^\H_\sec$ and $\A^\H_{\neg\sec}$. Moreover, $\N$ accepts a word $w$ iff $w\in L(\A_\spec)$, and there exist two words $w^+,w^-\in\noiseH(w)$ such that $w^+\in L(\A_\sec)$ and $w^-\notin L(\A_\sec)$. Thus, realizability with privacy of $\zug{\A_\spec,\A_\sec,\H}$ can be reduced to the nonemptiness of $\N$, which can be decided in polynomial time. 
	\end{proof}

We continue to the case $\H$ should be searched.

\begin{thm}\label{np}
	Given a finite set of output signals $O$, DPWs $\A_\spec$ and $\A_\sec$ over $2^O$, a hiding cost function $\cost:O\rightarrow \Nat$, and a budget $b\in \Nat$, deciding whether $\zug{\A_\spec,\A_\sec,\cost,b}$ is realizable with privacy is NP-complete.
	Moreover, hardness holds already when the specification and secret are given by DBWs.
\end{thm}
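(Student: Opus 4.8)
The plan is to prove the two directions separately. Membership in NP follows directly from the polynomial-time procedure of the previous theorem, and NP-hardness I would obtain by a reduction from \textsc{Set Cover}. For membership, the natural nondeterministic witness is the set $\H$ itself: the algorithm guesses a set $\H\subseteq O$, checks in polynomial time that $\cost(\H)\le b$, and then applies the polynomial-time algorithm of Theorem~\ref{ptime} to decide whether $\zug{\A_\spec,\A_\sec,\H}$ is realizable with privacy. Since $|\H|\le |O|$ and every step is polynomial, this places the problem in NP.

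For hardness I would reduce from \textsc{Set Cover}: given a universe $U=\{1,\ldots,m\}$, sets $S_1,\ldots,S_n\subseteq U$ (WLOG each element lies in some set), and a bound $k$, decide whether some $k$ of the sets cover $U$. I would take $O=\{y_1,\ldots,y_n\}$, one signal per set, a uniform cost, and budget $b=k$, so that a set $\H$ corresponds to the subfamily $J=\{j:y_j\in\H\}$. The specification $\D_\spec$ is the DBW for $\Alw\bigwedge_{j=1}^{n}\neg y_j$, whose language is the singleton $\{\varnothing^\omega\}$ consisting of the all-false computation. The secret $\D_\sec$ checks a bounded prefix: it accepts $w$ iff for every position $i\in[m]$ the $i$-th letter satisfies $\theta_i=\bigvee_{j:\,i\in S_j} y_j$, after which it loops in an accepting sink. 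Both are deterministic B\"uchi automata of size polynomial in the instance.

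For correctness, recall that in the closed setting realizability with privacy of $\zug{\spec,\sec,\H}$ amounts to the existence of a word $w\in L_\spec$ with two completions $w^+,w^-\in\noiseH(w)$ such that $w^+\in L_\sec$ and $w^-\notin L_\sec$. Here $w$ is forced to be $\varnothing^\omega$, so its visible projection pins every visible signal to false; the word $\varnothing^\omega$ itself serves as $w^-$ (it already falsifies $\theta_1$); and a completion $w^+\in L_\sec$ exists iff, at every position $i\le m$, some \emph{hidden} signal $y_j$ with $i\in S_j$ can be turned on, i.e. iff $J$ covers $U$. Hence $\zug{\D_\spec,\D_\sec,\cost,k}$ is realizable with privacy iff $U$ has a cover of size at most $k$, and the reduction uses only DBWs and a uniform cost, as required.

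The step I expect to be most delicate is arguing that the reduction is \emph{tight}, i.e. that a subfamily $J$ which is not a cover cannot produce a flip. The subtlety is that $w^+$ and $w^-$ must agree with $w$ only on the visible signals and need not satisfy $\spec$; were the specification trivial, one could exploit the freedom in the visible word to satisfy all of the $\theta_i$ while still falsifying one of them at a single pivot position, collapsing the condition to ``hide any one useful signal'' and breaking the reduction. Forcing the visible behavior to be all-false via the specification removes exactly this freedom and is what makes satisfaction of the secret require genuine coverage by the hidden signals. This is why, unlike the trivial specification used in Theorem~\ref{exp dbw}, the hardness argument here inherently relies on a non-trivial (yet still deterministic B\"uchi) specification.
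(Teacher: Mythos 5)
Your proposal is correct and follows essentially the same approach as the paper: the NP membership argument (guess $\H$, check the budget, invoke Theorem~\ref{ptime}) is identical, and your Set Cover reduction has exactly the structure of the paper's Vertex Cover reduction -- a specification forcing the single computation $\varnothing^\omega$, and a secret whose first $m$ positions each demand a signal from a designated subset to be true, so that hiding suffices iff the hidden signals form a cover. Your closing observation that the non-trivial specification is what prevents the system from exploiting visible signals to trivialize the hiding requirement is also accurate and matches the role played by $\set{\varnothing^\omega}$ in the paper's argument.
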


\begin{proof}
	For the upper bound, a nondeterministic Turing machine can guess a set $\H\subseteq O$, check whether $\cost(\H)\leq b$, and, by Theorem~\ref{ptime}, check in polynomial time whether $\zug{\A_\spec,\A_\sec,\H}$ is realizable with privacy.
	
	For the lower bound, we describe a polynomial-time reduction from the {\em vertex-cover\/} problem.
	In this problem, we are given an undirected graph $G=\zug{V,E}$ and $k \geq 1$, and have to decide whether there is a set $S \subseteq V$ such that $|S| \leq k$ and for every edge $\{v,u\} \in E$, we have that $E \cap S \neq \varnothing$. 
	Given an undirected graph $G=\zug{V,E}$, with $E=\set{e_1,e_2,\ldots,e_m}$, we consider a closed setting with $O=V$ and construct DBWs $\A_\spec$ and $\A_\sec$ over the alphabet $2^V$ such that for all $\H\subseteq V$, it holds that $\zug{\A_\spec,\A_\sec,\H}$ is realizable with privacy iff $\H$ is a vertex cover of $G$. Accordingly,  there is a vertex cover of size $k$ in $G$ iff $\zug{\A_\spec,\A_\sec,\cost,k}$ is realizable with privacy for the uniform cost function that assigns $1$ to all signals in $O$. 
	
	We define $\A_\spec$ and $\A_\sec$ over the alphabet $2^V$ as follows.
	The DBW $\A_\spec$ is a 2-state DBW that accepts the single word ${\varnothing}^{\omega}$.
	The DBW $\A_\sec=\zug{2^V,Q,q_1,\delta,\alpha}$ for the secret is defined as follows. 
	The set of states is $Q=\set{q_1,q_2,\ldots,q_{m+1}}$. The set of accepting states is $\alpha=\set{q_{m+1}}$. The transition function $\delta$ is defined for all $S\subseteq O$ and $i\leq m$ as follows. If $S\cap e_i\neq \varnothing$, then $\delta(q_i,S)=q_{i+1}$, and otherwise, $\delta(q_i,S)=\bot$. Finally, $\delta(q_{m+1},S)=q_{m+1}$, for all $S\subseteq V$.
	
	Intuitively, words in $L(\A_\sec)$ encode vertex covers of $G$. Indeed, if $w=S_1\cdot S_2\cdot S_2\cdot \ldots \in L(\A_\sec)$, then for all $i\leq m$, we have that $S_i\cap e_i\neq \varnothing$. Thus, if for all $i\leq m$ we set $v_i\in V$ to be some vertex in $S_i\cap e_i$, then $\set{v_i,\ldots ,v_m}$ is a vertex cover of $G$. 
	
Formally, we prove that $\zug{\A_\spec,\A_\sec,\H}$ is realizable with privacy iff $\H$ is a vertex cover of $G$. 
Consider first a vertex cover $\H$ of $G$. 
Let $\T$ be a transducer that generates the computation ${\varnothing}^{\omega}$ and hides the signals in $\H$.
In every step of the computation, the observer sees that the signals in $V\setminus \H$ are assigned $\False$.
Assigning $\False$ to all the signals in $\H$ results in a word $w^-$ that is not in $L(\A_\sec)$.
On the other hand, since $\H$ is a vertex cover, i.e., it intersects every $e\in E$, assigning $\True$ to all the signals in $\H$ results in a word $w^+$ that is in $L(\A_\sec)$. 
Thus, $\T$ realizes $\zug{\A_\spec,\A_\sec,\H}$ with privacy.

For the other direction, consider a set $\H \subseteq V$ that is not a vertex cover of $G$.
Consider a transducer $\T$ that realizes $L(\A_\spec)$ and hides the set $\H$ of signals.
Since $\T$ realizes $L(\A_\spec)$, it generates the computation $\varnothing^{\omega}$.
Since $\H$ is not a vertex cover of $G$, there is $0\leq i\leq m$ such that $\H\cap \{e_i\}=\varnothing$.
Then, at every step of the computation, the observer sees the values of the signals in $e_i$.
In particular, at the $i$-th step of the computation, the observer sees that the signals in $e_i$ are assigned $\False$, which reveals that the computation is not in $L(\A_\sec)$. Thus, $\T$ does not realize $\zug{\A_\spec,\A_\sec,\H}$ with privacy, and we are done.	
\end{proof}	

\section{Bounded Synthesis with Privacy}
In the general synthesis problem, there is no bound on the size of the generated system. It is not hard to see that if a system that realizes the specification exists, then there is also one whose size is bounded by the size of a deterministic automaton for the specification. For the case of LTL specifications, this gives a doubly-exponential bound on the size of the generated transducer, which is known to be tight \cite{Ros92}. In \cite{SF07}, the authors suggested to study {\em bounded synthesis}, where the input to the problem includes also a bound on the size of the system. The bound not only guarantees the generation of a small system, if it exists, but also reduces the complexity of the synthesis problem and gives rise to a symbolic implementation and further extensions \cite{Ehl10,KLVY11}. In particular, for LTL, it is easy to see that bounded synthesis can be solved in PSPACE, as one can go over and model-check all candidate systems.
For specifications in DPW, the bound actually increases the complexity, as going over all candidates results in an algorithm in NP.

In this section we study bounded synthesis with privacy. As in traditional synthesis, the hope is to both reduce the complexity of the problem and to end up with smaller systems. In addition to a specification $L_\spec$, a secret $L_\sec$, and a  set  $\H\subseteq I\cup O$ of hidden signals, we are given a bound $n\in \Nat$, represented in unary, and we are asked to construct an $(I/O)$-transducer with at most $n$ states that realizes $\zug{L_\spec,L_\sec,\H}$ with privacy, or to determine that no such transducer exists.  As in the unbounded case, we can define the problem also with respect to a hiding cost function and a budget. 

\subsection{Hiding secrets by a bounded system  is hard}\label{bdd hiding hard subsec}
We first show that hiding secrets in a bounded setting is hard. In fact, the complexity of hiding goes beyond the complexity of bounded synthesis with no privacy already in the case the specification and secrets are given by LTL formulas. In the case of DBWs and DPWs, hiding is also more complex than bounded synthesis without privacy, but the difference is not significant. 

The key idea in both results is similar to the one in the proof of \autoref{lb dbw hiding}. There, we reduce realizability of NBWs to hiding of secrets given by DBWs. Essentially, we use the hidden signals to imitate nondeterminism. 
Here, with a bound on the size of the system, we cannot reduce from realizability, as the problem has the flavor of model checking many candidates. Accordingly, we reduce from universality, either in the form of LTL formulas with universally-quantified atomic propositions, or in the form of language-universality for NBWs. 

\begin{thm}\label{ltl hiding expspace}
	Given two disjoint finite sets $I$ and $O$, an LTL formula $\sec$ over $2^{I\cup O}$, a set $\H\subseteq I\cup O$ of hidden signals, and a bound $n\geq 1$, given in unary, deciding whether there exists an $(I/O)$-transducer of size at most $n$ that $\H$-hides $\sec$ is EXPSPACE-hard. The problem is EXPSPACE-hard already when $n=1$.
\end{thm}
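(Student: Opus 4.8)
The plan is to reduce from the EXPSPACE-complete problem of deciding validity (universality) of LTL formulas with universally quantified atomic propositions, matching the structure of the proof of Theorem~\ref{lb dbw hiding} but adapted for the bounded setting. The key insight noted by the authors is that with a bound on the transducer size, we can no longer reduce from realizability (which has the flavor of checking that *some* system works); instead, with the bound $n=1$ fixed, the single candidate transducer is essentially forced, and hiding becomes a statement that *every* completion of the hidden signals witnesses both truth values — which has a universal flavor. So the natural source problem is universality of a quantified LTL formula.

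**First I would** set up the reduction. Suppose we are given an LTL formula $\xi$ over atomic propositions $P = \{p_1,\ldots,p_m\}$, and we wish to decide whether $\forall p_1 \cdots \forall p_m.\, \xi$ holds, i.e., whether $L(\xi) = (2^P)^\omega$; this is EXPSPACE-hard by standard arguments. I would let the hidden signals $\H$ encode the quantified propositions $P$, place them (say) in $I$ so that hardness holds for hiding input signals, and take $O$ to contain a single dummy signal with no constraints. The idea is to build a secret $\sec$ over $I \cup O$ so that, with $n=1$, the unique computation the forced transducer can produce has a fixed visible projection, and the set $\noiseH(\T(w_I))$ ranges over all assignments to the hidden propositions $P$. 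Hiding then requires that among these completions there is a $w^+ \models \sec$ and a $w^- \not\models \sec$, i.e., that $\sec$ is neither universally true nor universally false over the hidden completions.

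**The technical heart** is arranging that hiding is \emph{equivalent} to $\xi$ being neither valid nor unsatisfiable after quantification — and then forcing it to pin down exactly universality. Since hiding needs both a satisfying and a falsifying completion, I would design $\sec$ so that a falsifying completion always trivially exists (so the $w^-$ requirement is free), reducing hiding to the existence of a satisfying completion of a related formula whose universal validity is what we want to test. Concretely, I expect to set $\sec$ to depend on $\xi$ in such a way that $\T$ fails to $\H$-hide $\sec$ precisely when $\xi$ is valid (or precisely when it is not), so that deciding hiding decides the EXPSPACE-complete universality question. With $n=1$, the transducer cannot use memory to diversify its output across different input streams, so the quantification over $w_I$ in the definition of $\H$-hiding collapses into a pure quantification over completions of the hidden signals, which is exactly the propositional-level $\forall P$ quantifier.

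**The main obstacle** I anticipate is the interplay between the two quantifiers in the definition of $\H$-hiding: the outer universal quantifier over all input words $w_I \in (2^I)^\omega$ and the inner existential requirement on $w^+, w^-$. Because $\H \subseteq I$, hiding input signals means the observer's uncertainty is over inputs, and I must ensure the reduction correctly aligns ``for all visible behaviors, both satisfaction values are achievable'' with the single quantifier-alternation structure of quantified LTL validity. The cleanest route is likely to make the $n=1$ transducer's visible output constant (via the dummy $O$ signal), so that every $w_I$ induces the \emph{same} visible projection and the outer $\forall w_I$ becomes vacuous, leaving only the inner existence of witnessing completions over $\H$ — which is where the full EXPSPACE power of the quantified LTL formula is injected. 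Verifying that this collapse is faithful, and that the forced $n=1$ transducer genuinely has no way to encode extra information, is the step requiring the most care.
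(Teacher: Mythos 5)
There is a genuine gap, and it is fatal to the approach. Your source problem is not EXPSPACE-hard: if you universally quantify \emph{all} the atomic propositions of $\xi$, then ``$\forall p_1\cdots\forall p_m.\,\xi$ holds'' is the same as ``$L(\xi)=(2^P)^\omega$'', which is plain \LTL validity and is PSPACE-complete. The EXPSPACE-complete problem (AQLTL satisfiability, which is what the paper reduces from) crucially has \emph{free} propositions alongside the quantified ones: it asks whether there \emph{exists} a word over the free propositions such that \emph{all} completions of the quantified propositions satisfy $\theta$. The EXPSPACE hardness lives entirely in this $\exists\forall$ alternation between free and quantified propositions; with no free propositions left, it evaporates.

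Your reduction design then compounds the problem by deliberately destroying the corresponding alternation on the target side. You hide all the quantified propositions, make the only visible signal a dummy output, and argue that ``the outer $\forall w_I$ becomes vacuous, leaving only the inner existence of witnessing completions over $\H$.'' But once the outer quantifier is vacuous, the hiding question for the $n=1$ transducer degenerates into two \LTL satisfiability checks (does some completion of the constant visible word satisfy $\sec$, and does some completion falsify it), each decidable in PSPACE. So the instances your reduction produces are PSPACE-decidable, and they cannot carry EXPSPACE hardness. The paper's construction does the opposite: it keeps the free propositions of the AQLTL formula \emph{visible}, setting $I=AP\setminus A$ and $\H=O=A\cup\set{p}$ for a fresh signal $p$, with $\sec=\theta\lor p$. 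The fresh $p$ plays the role you correctly anticipated (it makes the witness $w^+$ free, since setting $p$ to $\True$ satisfies $\sec$), but the quantifier over input words is kept alive: hiding then says that for \emph{every} visible input word (i.e., every assignment to the free propositions) there \emph{exists} a completion of $A$ falsifying $\theta$, which is exactly the complement of AQLTL satisfiability. In short, you had the right source logic and the right one-sided-witness trick, but the decision to make the universal quantifier over $w_I$ vacuous --- rather than letting it range over the free propositions --- is precisely what a correct proof must avoid.
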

\begin{proof}
	We describe a reduction from satisfiability of AQLTL, namely LTL with universally-quantified atomic propositions, which is known to be EXPSPACE-complete \cite{SVW87}. An AQLTL formula is of the form $\forall p_1,p_2,\ldots,p_k \theta$, where $\theta$ is an LTL formula over the set of atomic propositions $AP$, and $A=\set{p_1,p_2,\ldots,p_k}\subseteq AP$. The satisfiability problem is to determine whether there is $w\in (2^{AP})^\omega$ such that for all $w' \in \noise{A}(w)$, we have that $w' \models \theta$.

	Consider an AQLTL formula $\forall p_1,p_2,\ldots,p_k \theta$. Let  $A=\set{p_1,p_2,\ldots ,p_k}\subseteq AP$. 
	Let $I=AP\setminus A$ and $O=\H=A\cup \set{p}$, where $p$ is a fresh signal not in $AP$. Also, let $\sec=\theta\lor p$. 
	We claim that $\forall p_1,p_2,\ldots,p_k \theta$ is not satisfiable iff there is a single state $(I/O)$-transducer that $\H$-hides $\sec$. 
	
	Assume first that there is a single state $(I/O)$-transducer $\T$ that $\H$-hides $\sec$. Then, for every $w_I \in (2^{AP \setminus A})^{\omega}$, there is $w' \in \noise{A}(\T(w_I))$ such that $w' \not \models \sec$. In particular $w' \not \models \theta$. 
	Thus, $\forall p_1,p_2,\ldots,p_k \theta$ is not satisfiable. 
	
	Assume now that $\forall p_1,p_2,\ldots,p_k \theta$ is not satisfiable. Thus, for every $w\in (2^{AP})^\omega$, there is $w' \in \noise{A}(w)$ such that $w' \not \models \theta$. Also, as $p\in \H$, there are $w^+ \in \noise{\{p\}}(w')$ and $w^- \in \noise{\{p\}}(w')$ such that $w^+ \models p$ and $w^- \not \models p$. Note that $w^+,w^- \in \noise{A}(w)$ and that $w^+ \models \sec$ whereas $w^- \not \models \sec$. It follows that all $(I/O)$-transducers $\H$-hides $\sec$, and we are done.
\end{proof}

\begin{thm}\label{dbw hiding pspace}
	Given two disjoint finite sets $I$ and $O$, a DBW $\D_\sec$ over $2^{I\cup O}$, a set $\H\subseteq I\cup O$ of hidden signals, and a bound $n\geq 1$, represented in unary, deciding whether there exists an $(I/O)$-transducer of size at most $n$ that $\H$-hides $\D_\sec$ is PSPACE-hard.
	The problem is PSPACE-hard already when $n=1$.
\end{thm}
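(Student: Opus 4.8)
The plan is to prove PSPACE-hardness by a polynomial-time reduction from the \emph{universality problem for NBWs} (given an NBW $\A$, decide whether $L(\A)=\Sigma^\omega$), which is PSPACE-complete. This mirrors the strategy behind Theorem~\ref{ltl hiding expspace}: once a bound is placed on the transducer, the problem has the flavor of model checking the (essentially single) candidate, so we cannot reduce from realizability as in Theorem~\ref{lb dbw hiding}, and must instead reduce from a universality-type problem. The construction itself reuses the run-encoding idea of Theorem~\ref{lb dbw hiding}, where hidden signals resolve the nondeterminism of $\A$, turning the existence of an accepting run into the existence of a completion of the hidden signals.

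Concretely, given an NBW $\A=\zug{2^X,Q,q_0,\delta,\alpha}$, I would first assume, exactly as in Theorem~\ref{lb dbw hiding}, that $\A$ has a single initial state and that every word in $(2^X)^\omega$ has at least one rejecting run (achieved by adding a nondeterministic transition to a rejecting sink). I then set $I=X$, take $O=\H$ to be a set of signals encoding $Q$, and define the DBW $\D_\sec$ over $2^{I\cup O}$ to be precisely the automaton $\D_\psi$ of Theorem~\ref{lb dbw hiding}: writing a letter of $2^{I\cup O}$ as a pair $\zug{\sigma,q}\in 2^X\times Q$ and a word as a combination $w\oplus r$, we let $L(\D_\sec)=\{w\oplus r : q_0\cdot r \mbox{ is an accepting run of }\A\mbox{ on }w\}$. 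As argued there, the $Q$-component resolves the nondeterminism of $\A$, so $\D_\sec$ is genuinely deterministic; it has $O(|Q|)$ states while $\H$ has $O(\log|Q|)$ signals, so the reduction is polynomial.

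For correctness, I would analyze the single candidate available when $n=1$. A single-state $(I/O)$-transducer $\T$ emits a fixed assignment to the (hidden) output signals, so for every input word $w\in (2^X)^\omega$ its computation is $w\oplus r_0$ for a constant $r_0\in Q^\omega$. Since $\H=O$, the visible projection is exactly $w$, and hence $\noise{\H}(\T(w))=\{w\oplus r : r\in Q^\omega\}$ is the set of all combinations of $w$ with an arbitrary sequence $r\in Q^\omega$. Now $\T$ $\H$-hides $\D_\sec$ iff for every $w$ there are $w^+,w^-\in\noise{\H}(\T(w))$ with $w^+\in L(\D_\sec)$ and $w^-\notin L(\D_\sec)$. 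The word $w^-$ always exists, because by our assumption every $w$ has a rejecting run; and $w^+$ exists iff $w$ has an accepting run, i.e.\ iff $w\in L(\A)$. Consequently, a size-$1$ transducer $\H$-hides $\D_\sec$ iff $\A$ is universal, which establishes PSPACE-hardness already for $n=1$ (and hence for the general problem, which contains these instances).

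The main point to get right is the alignment of the two quantifier structures: the outer ``for all $w$'' of $\H$-hiding must correspond to the ``for all $w$'' of universality, while the inner existential choice of a completion of the hidden signals must correspond to the existential choice of an accepting run. This is exactly what forces $O=\H$ (so that the universally quantified word is precisely the visible part and the transducer's fixed output is irrelevant), and it is what makes the guaranteed rejecting run essential: it trivializes the $w^-$ requirement, so that hiding collapses to the pure ``an accepting run exists for every word'' statement. The only mild technical care needed is in the encoding of $Q$ by $2^\H$ when $|Q|$ is not a power of two, handled exactly as in Theorem~\ref{lb dbw hiding}, and in recalling that NBW universality---rather than NBW realizability---is the PSPACE-complete source appropriate for the bounded setting.
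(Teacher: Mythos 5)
Your proposal is correct and follows essentially the same route as the paper: a reduction from NBW universality in which $O=\H$ encodes the states of $\A$, the DBW $\D_\sec$ is exactly the run-encoding automaton of Theorem~\ref{lb dbw hiding}, and the guaranteed rejecting run makes a single-state transducer $\H$-hide $\D_\sec$ iff $\A$ is universal. The paper's proof matches yours in every essential step, including the observation that the choice of the fixed output of the one-state transducer is irrelevant.
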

\begin{proof}
We describe a reduction from the NBW universality problem, which is PSPACE-hard \cite{SVW87}. Let $\N=\zug{2^I,Q,q_0,\delta,\alpha}$ be an NBW and w.l.o.g assume that for all $w\in (2^I)^\omega$, there is a rejecting run of $\N$ over $w$. Let $O=\H$ be a set of output signals that encode the states in $Q$. 
For convenience, we abuse notation and refer to elements in $2^O$ as states $q \in Q$. Let $\D_\psi=\zug{2^{I\cup O},Q,q_0,\delta',\alpha}$ be the DBW over the alphabet $2^{I\cup O}$, where the deterministic transition function $\delta':Q\times 2^{I\cup O}\rightarrow Q$ is defined for all $q\in Q$, $o\in 2^O$ and $i\in 2^{I}$, by $\delta'(q,i\cup o)=o$ if $o\in \delta(q,i)$, and $\delta'(q,i\cup o)=\bot$ otherwise. Clearly, the size of $\D_\psi$ is linear in that of $\N$. Also, by its  definition, we have that $\D_\psi$ accepts a word $\pi\in (2^{I\cup O})^\omega$ iff $r=q_0\cdot \hide_{I}(\pi)$ is an accepting run of $\N$ over the word $w_I=\hideH(\pi)$.

Let $o\in 2^O$ be some arbitrary output assignment, and consider the single state $(I/O)$-transducer $\T_o=\zug{I,O,\set{s_0},s_0,\eta,\tau}$, where $\eta(s_0,i)=s_0$ for all $i \in 2^I$ and $\tau(s_0)=o$. We claim that $\N$ is universal iff $\T_o$ $\H$-hides $\D_\psi$. 

First, as every word has a rejecting run in $\N$, then $\N$ is universal iff for all $w_I\in(2^I)^\omega$, there exist $\pi^+,\pi^-\in \noiseH(\T(w_I))$ such that $r^+=q_0\cdot\hide_{I}(\pi^+)$ is an accepting run of $\N$ over $w_I$, and $r^-=q_0\cdot \hide_{I}(\pi^-)$ is a rejecting run of $\N$ over $w_I$. Thus, $\pi^+\in L(\D_\psi)$ and $\pi^-\notin L(\D_\psi)$. It follows that $\N$ is universal iff $\T_o$ $\H$-hides $\D_\psi$. Since $o\in 2^O$ is arbitrary, we conclude that $\N$ is universal iff there exists a single state $(I/O)$-transducer $\T$ that $\H$-hides $\D_\psi$.
\end{proof}

\subsection{Solving bounded synthesis with privacy}
We can now present the tight complexity for bounded synthesis with privacy for both types of specification formalisms. 
For the upper bounds, we construct an NGBW $\N^\H_{\spec,\sec}$ that accepts exactly all words that satisfy $\spec$ and hide $\sec$, and search for an $(I/O)$-transducer of size $n$ whose language is contained in that of $\N^\H_{\spec,\sec}$. 

\begin{thm}[Bounded synthesis with privacy]\label{bounded syn complexity}
	Given two disjoint finite sets $I$ and $O$, specification $L_\spec$ and secret $L_\sec$ over $I\cup O$,  a set $\H\subseteq I\cup O$ of hidden signals, and a bound $n \in \Nat$, represented in unary, deciding whether there is 
an $(I/O)$-transducer with at most $n$ states that realizes $\zug{L_\spec,L_\sec,\H}$ with privacy is PSPACE-complete for $L_\spec$ and $L_\sec$  given by DPWs, and is EXPSPACE-complete for $L_\spec$ and $L_\sec$ given by LTL formulas. Hardness in PSPACE holds already for DBWs.
\end{thm}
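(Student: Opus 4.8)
The plan is to reduce bounded synthesis with privacy to a bounded language-containment problem, solve it by a guess-and-verify procedure, and obtain the lower bounds for free from the hiding-hardness results, since a trivial specification turns the problem into pure hiding.

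First I would build, exactly as in the proof of Lemma~\ref{dpw}, an automaton $\N^\H_{\spec,\sec}$ that accepts a computation $w$ iff $w\in L_\spec$ and there are $w^+,w^-\in\noiseH(w)$ with $w^+\in L_\sec$ and $w^-\notin L_\sec$. Concretely, take NBWs $\A_\spec$, $\A_\sec$, and $\A_{\neg\sec}$ for $L_\spec$, $L_\sec$, and its complement, apply Lemma~\ref{noise nxw} to obtain $\A_\sec^\H$ and $\A_{\neg\sec}^\H$, and let $\N^\H_{\spec,\sec}$ be an NBW for the intersection $L(\A_\spec)\cap L(\A_\sec^\H)\cap L(\A_{\neg\sec}^\H)$. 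For DPW inputs this $\N^\H_{\spec,\sec}$ is polynomial in $\D_\spec$ and $\D_\sec$, whereas for \LTL inputs it is of size $2^{O(|\spec|+|\sec|)}$. The key observation is that, since every $w\in L(\T)$ is a computation $\T(w_I)$ and vice versa, a transducer $\T$ realizes $\zug{L_\spec,L_\sec,\H}$ with privacy iff $L(\T)\subseteq L(\N^\H_{\spec,\sec})$.

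For the upper bound I would then guess an $(I/O)$-transducer $\T$ with at most $n$ states (its transition and labeling tables fit in space polynomial in the input, the alphabet factor $2^{|I|}$ being already accounted for by the explicit automata) and verify the containment $L(\T)\subseteq L(\N^\H_{\spec,\sec})$. Viewing $\T$ as a deterministic word automaton $\hat\T$ over $2^{I\cup O}$ of $O(n)$ states, the containment fails iff the intersection of $\hat\T$ with the complement of $\N^\H_{\spec,\sec}$ is nonempty. Using a ranking-based complement NBW, whose states are describable in space $O(|\N^\H_{\spec,\sec}|\log|\N^\H_{\spec,\sec}|)$, this nonemptiness can be searched for on the fly, in space logarithmic in the size of the product, hence polynomial in $|\N^\H_{\spec,\sec}|$ and $n$. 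Since PSPACE is closed under complementation and nondeterministic guessing, the whole guess-and-verify procedure runs in space polynomial in $|\N^\H_{\spec,\sec}|+n$. For DPWs this is PSPACE, and for \LTL formulas, where $|\N^\H_{\spec,\sec}|$ is exponential, this is EXPSPACE.

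The lower bounds require no new work. Taking $L_\spec=(2^{I\cup O})^\omega$, which is realized by every transducer, realizing $\zug{L_\spec,L_\sec,\H}$ with privacy by a transducer of size at most $n$ coincides with the existence of a transducer of size at most $n$ that $\H$-hides $L_\sec$. Hence the EXPSPACE-hardness of hiding an \LTL secret from Theorem~\ref{ltl hiding expspace}, and the PSPACE-hardness of hiding a DBW secret from Theorem~\ref{dbw hiding pspace}, transfer verbatim; both already hold for $n=1$, and a DBW is in particular a DPW, so the PSPACE lower bound holds for the DPW case as well. The step I expect to be the main obstacle is the verification of $L(\T)\subseteq L(\N^\H_{\spec,\sec})$ within the target space bound: unlike ordinary bounded synthesis, where the effective specification is deterministic and model checking a candidate is straightforward, here $\N^\H_{\spec,\sec}$ is genuinely nondeterministic, so checking containment of the deterministic transducer language requires complementing a B\"uchi automaton, and the care needed is to perform this complementation implicitly---via ranks computed on the fly---so that the nonemptiness test uses only space polynomial in $|\N^\H_{\spec,\sec}|$ rather than materializing the exponential complement. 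This is exactly what keeps the DPW case in PSPACE and, after the exponential automaton construction, the \LTL case in EXPSPACE.
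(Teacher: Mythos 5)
Your proposal is correct and follows essentially the same route as the paper: build the polynomial-size (for DPWs) or exponential-size (for LTL) nondeterministic automaton $\N^\H_{\spec,\sec}$ of Lemma~\ref{dpw}, reduce realizability with privacy of a candidate transducer to the containment $L(\T)\subseteq L(\N^\H_{\spec,\sec})$ checked in space polynomial in $|\N^\H_{\spec,\sec}|$, and obtain both lower bounds by setting $L_\spec=(2^{I\cup O})^\omega$ and invoking Theorems~\ref{ltl hiding expspace} and~\ref{dbw hiding pspace}. The only differences are cosmetic: you guess a candidate transducer and appeal to NPSPACE $=$ PSPACE with an explicit ranking-based on-the-fly complementation, whereas the paper enumerates all candidates and cites the standard PSPACE containment test of \cite{SVW87}.
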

\begin{proof}
	We start with the upper bound for the case $L_\spec$ and $L_\sec$ are given by DPWs $\D_\spec$ and $\D_\sec$. 
	As described in Lemma~\ref{dpw}, given $\D_\spec$ and $\D_\sec$, it is possible to construct an NGBW $\N^\H_{\spec,\sec}$ that accepts exactly all words $w\in L(\D_\spec)$ for which there exists $w^+,w^-\in \noiseH(w)$ with $w^+\in L(\D_\sec)$ and $w^-\notin L(\D_\sec)$. Moreover, $\N^\H_{\spec,\sec}$ is of constant index $k=3$, and of size polynomial in the size and index of $\D_\spec$ and $\D_\sec$.
	
	Once we have constructed $\N^\H_{\spec,\sec}$, we can go through all $(I/O)$-transducers of size $n$ and to check if for one of them it holds that all its computations satisfy $\N^\H_{\spec,\sec}$. This reduces to checking the containment of the language of each $(I/O)$-transducer of size $n$ in that of $\N^\H_{\spec,\sec}$, which can be done in PSPACE~\cite{SVW87}. 
	
	When we start with three NBWs $\A_\spec$, $\A_\sec$, and $\A_{\lnot\sec}$, for $L_\spec$, $L_\sec$, and $L_{\neg \sec}$, the construction of $\N^\H_{\spec,\sec}$ is polynomial. Hence, an EXPSPACE upper bound for LTL follows from the exponential translation of LTL formulas to NBWs. 
	
	For the lower bounds, note that  by taking $L_\spec=(2^{I\cup O})^\omega$ (that is, $\True$ for an LTL formula and the universal $\D_\True$ for DBW), we have that there is a transducer with $n$ states that realizes $\zug{L_\spec,L_\sec,\H}$ with privacy iff there is a transducer with $n$ states  that $\H$-hides $L_\sec$. Hence, the lower bounds follow from Theorems~\ref{ltl hiding expspace} and \ref{dbw hiding pspace}.  
\end{proof}

\section{Privacy with Certificates}
In this section we study the problem of synthesizing systems that not only maintain privacy against an observer in the usual sense, but also generate for the user a \emph{certificate for privacy}. This certificate is a computation that agrees with the real computation on the visible signals, yet it disagrees with it on the hidden signals in a way that induces a different satisfaction value for the secret. Accordingly, the certificate  witnesses that the secret is indeed hidden from the observer. 

Recall that a secret $\sec$ is $\H$-hidden in a computation $w\in (2^{I\cup O})^\omega$ iff there exist two computations $w^+,w^-\in \noiseH(w)$, such that $w^+\models \sec$ and $w^-\models \neg\sec$. 
Clearly, if $\sec$ is indeed $\H$-hidden in $w$, then we can always use $w$ as one of the witnesses $w^+$ or $w^-$, according to whether $\sec$ is satisfied or not in $w$. 
Hence,  $\sec$ is $\H$-hidden in $w$ iff there exists $w'\in \noiseH(w)$ such that ($w\models\sec$ iff $w'\models\neg\sec$) holds. 
Moreover, as all words in $\noiseH(w)$ agree with $w$ on $\V$, the witness word $w'$ is identified by its restriction $w'|_\H$ to the signals in $\H$. 
Accordingly, we may simply require the existence of a word $w_\H\in (2^\H)^\omega$, such that the witness word $w'\in \noiseH(w)$ for $\sec$ being $\H$-hidden in $w$ is such that $w'|_\H=w_\H$.
Thus, $w_\H$ describes a sequence of assignments to the signals in $\H$ with which the value of the secret $\sec$ is different from its value in $w$. 
We call $w_\H$ a \emph{certificate for the privacy of $\sec$ in $w$}. When $\sec$ and $w$ are clear from the context we simply refer to $w_\H$ as a \emph{certificate for privacy}. 

We can thus reformulate the notion of privacy against an observer as follows. An $(I/O)$-transducer $\T$ is said to $\H$-hide $\sec$ if for all $w_I\in (2^I)^\omega$, there exists a certificate $w_\H\in (2^\H)^\omega$ for $\sec$ in $\T(w_I)$. 
Note that a transducer $\T$ that $\H$-hides $\psi$ does not have to provide a certificate for $\psi$. An observer that tries to reveal the value of $\psi$ is not going to succeed, and the user is guaranteed that such a certificate exists,  but there is no requirement for $\T$ to generate it. Adding a certification may, however, be of interest: it assures the user that $\psi$ is indeed hidden. In this section we study the problem of synthesizing a system that not only realizes $\zug{\spec,\sec,\H}$ in the usual sense, but also generates for every input word an appropriate certificate. As we discuss below, adding certificates makes specifications and secrets more difficult to realize. That is, there are specification and secrets that become unrealizable with privacy once requiring the generation of certificates. Essentially, this follows from the fact that the certificate $w_\H$ has to be generated in an on-line manner.

Before we formally define the setting of \emph{certifiable privacy}, note that the complication added to the problem when requiring the online generation of a certificate is analogous to the difficulty in usual synthesis (with no privacy considerations) that is added on top of \emph{universal satisfiability}. In universal satisfiability we ask whether a specification is such that for all $w_I=i_0\cdot i_1\cdot i_2\cdots \in (2^I)^\omega$, there exists $w_O=o_0\cdot o_1\cdot o_2\cdots \in (2^O)^\omega$, such that $w_I\oplus w_O=(i_0\cup o_0)\cdot(i_1\cup o_1)\cdot(i_2\cup o_2)\cdots$ satisfies the specification. Then, in the synthesis problem, the inputs in $w_I$ are read one after the other, and we require that the output assignments in $w_O$ are generated online. 
Privacy (with no certificates) has the flavor of universal satisfiability with respect to the secret, while the setting with certification is closer to realizability as it is no longer enough that a certificate exists, instead we need to construct one on-line. It is hence not surprising that generating certificates makes the task of preserving privacy more difficult to realize. Indeed, we already know this is the case for realizability and universal satisfiability: There are specifications that are universally satisfiable yet not realizable. A simple example for such a specification is $\spec=o\leftrightarrow Xi$, which is clearly universally satisfiable, but cannot be realized as the system needs to predict the value of the input $i$ in order to generate the correct value for $o$ in the present. In \autoref{certificate example} we use similar formula to demonstrate the gap between synthesis with privacy, with and without certificates.

We continue and formally define the setting for synthesis with \emph{certified privacy}. 
A \emph{certifying transducer} $\C=\zug{I,O,\H,S,s_0,\rho,\tau,c}$ is defined similarly to a standard $(I/O)$-transducer but with a distinguished set of hidden signals $\H$ and an additional labeling function $c:S\rightarrow 2^\H$. 
For every input word $w_I=i_0\cdot i_1\cdot i_2\cdots \in (2^I)^\omega$, the run $r(w_I)=s_0\cdot s_1\cdot s_2\cdots \in S^\omega$, and the computation $\C(w_I)= (i_0\cup \tau(s_0))\cdot (i_1\cup \tau(s_1))\cdot (i_2\cup \tau(s_2))\cdots \in (2^{I\cup O})^\omega$, are defined as in a standard $(I/O)$-transducer. Then, the \emph{certificate} $c(w_I)\in (2^\H)^\omega$, for $\C(w_I)$, is defined to be $c(w_I)=c(s_0)\cdot c(s_1)\cdot c(s_2)\cdots $. Finally, the \emph{alternative computation} $\C'(w_I)\in (2^{I\cup O})^\omega$, is the unique computation that agrees with $\C(w_I)$ on the visible signals and with $c(w_I)$ on the hidden signals. Formally, $\C'(w_I)|_\V=\C(w_I)|_V$ and $\C'(w_I)|_\H=c(w_I)$.

Given LTL formulas $\spec$ and $\sec$ over $I\cup O$, and a set $\H\subseteq I\cup O$ of hidden signals, we say that $\C$ realizes $\zug{\spec,\sec,\H}$ with \emph{certified privacy}, if $\C$ realizes $\spec$ and for all $w_I\in (2^I)^\omega$, it holds that $c(w_I)$ is a certificate for $\sec$ being $\H$-hidden in $\C(w_I)$. I.e., for all $w_I\in (2^I)^\omega$ it holds that $\C(w_I)\models\spec$ and ($\C(w_I)\models \sec$ iff $\C'(w_I)\models\neg\sec$). 

Finally, in the \emph{synthesis with certified privacy problem}, we are given $\zug{\spec,\sec,\H}$, and we need to construct a certifying transducer $\C$ that realizes $\zug{\spec,\sec,\H}$ with certified privacy, or determine that no such transducer exists. 
\begin{exa}\label{certificate example}
	Let $I=\set{i}$, $\H=O=\set{o}$, and $\sec=G(o\leftrightarrow Xi)$. As we explain below, $\sec$ is $\H$-hidden in every $(I/O)$-transducer, yet there exists no certifying transducer that realizes $\zug{\True,\sec,\H}$ with certified privacy.  
	
	Consider a word $w=(i_0\cup o_0)\cdot (i_1\cup o_1)\cdot (i_2\cup o_2)\cdots \in (2^I)^\omega$. Clearly, $w\models\sec$ iff for all $j\geq 0$ we have that $o_j=\emptyset$ iff $i_{j+1}=\emptyset$. 
	This implies that $\sec$ is $\H$-hidden in $(I/O)$-transducers. Indeed, consider a transducer $\T$, then for every input word $w_I\in (2^I)^\omega$ there exist a unique word $w^+\in \noiseH(\T(w_I))$ that satisfies the aforementioned criterion, and thus satisfy $\sec$, while all other words in $\noiseH(\T(w_I))$ violate $\sec$.
	Note that then it follows that a certifying transducer $\C$ should manage to generate the exact unique $w^+$ either as the real computation $\C(w_I)$ or as the alternative computation $\C'(w_I)$. We show that this is impossible. Specifically, that the environment can force both computations to violate $\sec$. 
	
	Consider a certifying transducer $\C$. We prove that there exists an input word $w_I\in (2^I)^\omega$ such that both $\C(w_I)$ and $\C'(w_I)$ do not satisfy $\sec$.
	The word $w_I$ is defined so that the criterion $i_{j+1}=\emptyset$ iff $o_j=\emptyset$ is violated in $\C(w_I)$ at time $j=0$ and in $\C'(w_I)$ at time $j=1$. Thus, after two time steps both the real computation and the alternative computation violate the secret $\sec$, implying that $\C$ does not realize $\zug{\True,\sec,\H}$ with certified privacy.
	
	First, we define $i_0=\emptyset$. Then, let $o_0\in 2^O$ and $o'_0\in 2^O$ be the first real output of $\C$ and the first certificate output of $\C$, when reading the first input $i_0$. I.e., if $s_1=\rho(s_0,i_0)$, then $o_0=\tau(s_1)$ and $o'_0=c(s_1)$. 
	We choose $i_1\in 2^I$ so that $i_1=\emptyset$ iff $o_0\neq \emptyset$ holds. Observe that now the computation of $\C$ already violates $\sec$. Let now $o_1$ and $o'_1$ be the real and certificate outputs of $\C$ after reading the prefix $i_0\cdot i_1$. Similar to the previous choice of $i_1$, we now choose $i_2$ so that $i_2=\emptyset$ iff $o'_1\neq\emptyset$. As before, we can see that now the alternative computation generated by $\C$ is guaranteed to violate $\sec$ as well. Thus, no matter how we extend $i_0\cdot i_1\cdot i_2$ into an infinite word $w_I$, it holds that both $\C(w_I)$ and $\C'(w_I)$ violate $\sec$. In particular, if we let $w_I=i_0\cdot i_1\cdot i_2\cdot \emptyset^\omega$, then $\C(w_I)\models\neg\sec$ and $\C'(w_I)\models\neg\sec$. That is, $\C$ does not certify that $\sec$ is $\H$-hidden and we are done. 
	\hfill \qed
\end{exa}

\subsection{Solving certified privacy}
We continue to the solution of the synthesis problem with certified privacy. As discussed above, the need to generate the certificate on-line makes some specifications and secrets non-realizable. On the positive side, it enables the use of universal co-B\"uchi automata, which enable a Safraless synthesis procedure \cite{KV05c}. Thus, the same way universal satisfiability is used as a heuristic to solve the realizability problem \cite{KSS11}, our procedure here can be viewed as a heuristic to the problem of synthesis with privacy (without a certificate). 

\begin{lem}\label{dpw with certificates}
	Consider two disjoint finite sets $I$ and $O$, a subset $\H\subseteq I\cup O$, and LTL formula $\spec$ and $\sec$ over the signals $I \cup O$. There exists a UCW $\U^\H_{\spec,\sec}$ with alphabet $2^{I\cup O}\times 2^\H$ that accepts all words $w\oplus w'_\H$ where $w\models\spec$ and $w'_\H$ is a certificate for $\sec$ being $\H$-hidden in $w$. Moreover, the UCW $\U^\H_{\spec,\sec}$ is of size $2^{O(|\spec|+|\sec|)}$. 
\end{lem}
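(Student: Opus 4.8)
The plan is to fold the entire acceptance condition into a single \LTL formula over an enlarged signal set and then appeal to the standard exponential translation of \LTL to universal co-B\"uchi automata, rather than building separate automata and combining them. The latter would be awkward, since the certificate condition is inherently a disjunction (the value of $\sec$ flips in one of two possible directions), and universal automata are not closed under union with a small blow-up.

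First, I would make the product alphabet $2^{I\cup O}\times 2^\H$ concrete by introducing a fresh disjoint copy $\H'=\set{h' : h\in\H}$ of the hidden signals and identifying $2^{I\cup O}\times 2^\H$ with $2^{(I\cup O)\cup \H'}$: a letter $\zug{\sigma,h'}$ corresponds to the assignment that agrees with $\sigma$ on $I\cup O$ and records the certificate's hidden values on $\H'$. Under this identification, a combined word $w\oplus w'_\H$ encodes both computations at once: the real computation $w$ is its projection to $I\cup O$, while the alternative computation $w'$ agrees with $w$ on the visible signals $\V$ and reads the $\H'$-signals for the hidden ones.

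Next, I would define $\sec'$ to be the formula obtained from $\sec$ by replacing every hidden signal $h\in\H$ with its copy $h'\in\H'$; note $|\sec'|=|\sec|$. The crucial observation is that, evaluated on the combined word, $\sec$ captures ``$w\models\sec$'' (it reads $I\cup O$, which carry $w$), whereas $\sec'$ captures ``$w'\models\sec$'' (its visible atoms are read off $w$, which agrees with $w'$ on $\V$, and its hidden atoms are read off $\H'$, which carry $w'_\H$). Since $w'_\H$ is a certificate for $\sec$ being $\H$-hidden in $w$ exactly when $w\models\sec$ iff $w'\models\neg\sec$, the whole acceptance condition is expressed by the single \LTL formula
\[
\Phi \;=\; \spec \wedge \bigl((\sec\wedge\neg\sec')\vee(\neg\sec\wedge\sec')\bigr),
\]
over the signals $(I\cup O)\cup\H'$, whose length is $O(|\spec|+|\sec|)$. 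By construction, a combined word satisfies $\Phi$ iff $w\models\spec$ and $w'_\H$ is a certificate for $\sec$ in $w$.

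Finally, I would translate $\Phi$ into the desired UCW $\U^\H_{\spec,\sec}$: build an NBW of size $2^{O(|\Phi|)}$ for $\neg\Phi$ and dualize it (swap nondeterministic and universal branching, and complement the B\"uchi condition to co-B\"uchi), obtaining a UCW for the models of $\Phi$ of the same size, namely $2^{O(|\spec|+|\sec|)}$. The main thing to get right is the semantic bookkeeping of the two tracks --- verifying that $\sec'$ indeed evaluates to the truth value of $\sec$ on the alternative computation $w'$, and that the disjunction in the certificate condition is absorbed into the formula rather than into an expensive union of automata; once that is set up, the size bound is immediate from the \LTL-to-UCW translation.
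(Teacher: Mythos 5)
Your proposal is correct and follows essentially the same route as the paper's proof: introduce a primed copy $\H'$ of the hidden signals, substitute it into $\sec$ to get $\sec'$, express the whole condition as a single \LTL formula over $I\cup O\cup\H'$ (your XOR $(\sec\wedge\neg\sec')\vee(\neg\sec\wedge\sec')$ is just the paper's $\sec\leftrightarrow\neg\sec'$ spelled out), and obtain the UCW by building an NBW for the negation and dualizing, giving the $2^{O(|\spec|+|\sec|)}$ bound via \cite{VW94}.
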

\begin{proof}
	Let $\H'$ be a copy of the signals in $\H$, and let $\sec'$ be the LTL formula obtained from $\sec$ when replacing every signal in $\H$ with its copy in $\H'$. Consider the LTL formula $\Theta=\spec\wedge (\sec\leftrightarrow \neg \sec')$ above $I\cup O\cup \H'$. Since letters in $2^{I\cup O}\times 2^\H$ correspond to assignments to $I\cup O\cup \H'$, it follows immediately from the definition of $\Theta$ that  $\U^\H_{\spec,\sec}$ can be defined as a UCW for $\Theta$. The latter can be constructed by constructing an NBW for $\neg\Theta$ and then dualizing into the required UCW. By \cite{VW94}, such an NBW can be defined with size $2^{O(|\Theta|)}=2^{O(|\spec|+|\sec|)}$, and so we are done.
\end{proof}

\begin{thm}[LTL Synthesis with proved privacy]\label{2exp ltl certificate}
	Given two disjoint finite sets $I$ and $O$, LTL formulas $\spec$ and $\sec$ over $I\cup O$, and a set of hidden signals $\H\subseteq I\cup O$, deciding whether $\zug{\spec,\sec,\H}$ is realizable with certified privacy is 2EXPTIME-complete, and can be decided using a Safraless procedure. 
	\end{thm}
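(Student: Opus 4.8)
The plan is to obtain both bounds by reducing synthesis with certified privacy to standard LTL (or automata) synthesis over an enlarged output alphabet, using the UCW $\U^\H_{\spec,\sec}$ from Lemma~\ref{dpw with certificates} for the upper bound and a variant of the reduction from Theorem~\ref{2exp ltl} for the lower bound. The key observation for the upper bound is that a certifying transducer $\C=\zug{I,O,\H,S,s_0,\rho,\tau,c}$ is exactly a standard $(I/O')$-transducer, where $O'=O\cup\H'$ and $\H'$ is a fresh copy of the hidden signals: the ordinary output labeling $\tau$ produces the real assignment to $I\cup O$, while the certificate labeling $c$ produces, in the $\H'$-component, the alternative assignment to the hidden signals. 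The condition that $\C$ realizes $\zug{\spec,\sec,\H}$ with certified privacy is precisely that for every input $w_I$, the generated computation over $I\cup O\cup\H'$ satisfies the formula $\Theta=\spec\wedge(\sec\leftrightarrow\neg\sec')$ used in the proof of Lemma~\ref{dpw with certificates}. Hence $\zug{\spec,\sec,\H}$ is realizable with certified privacy iff $\Theta$ is $(I/O')$-realizable.

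For the upper bound I would therefore invoke Lemma~\ref{dpw with certificates} to build the UCW $\U^\H_{\spec,\sec}$ over $2^{I\cup O}\times 2^\H$, which is of size $2^{O(|\spec|+|\sec|)}$ and recognizes exactly the computations (together with their $\H'$-component) that realize certified privacy. Reading the $\H'$-component as part of the output, I would solve the standard synthesis problem of $\U^\H_{\spec,\sec}$ viewed as an $(I/O')$-specification. Since this is realizability of a \emph{universal} co-B\"uchi automaton, it can be solved by the Safraless procedure of \cite{KV05c}: translate the UCW into a universal co-B\"uchi tree automaton, bound the rank, and determinize/complement into a B\"uchi tree automaton whose emptiness is tested, all within time doubly exponential in $|\spec|+|\sec|$. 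This yields the 2EXPTIME membership and, simultaneously, the claimed Safraless decision procedure, and any witnessing transducer translates back into a certifying transducer by splitting the $O'$-labeling into its $O$- and $\H'$-parts.

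For the lower bound I would reduce from plain LTL synthesis, reusing the independent-secret trick from the proof of Theorem~\ref{2exp ltl}. Given a specification $\spec$ over $I\cup O$, introduce a fresh signal $p\notin I\cup O$, set $O'=O\cup\set{p}$ and $\H=\set{p}$, and take the secret $\sec=p$. Since $p$ does not occur in $\spec$ and the system fully controls $p$, a certifying transducer can always emit the opposite value of $p$ in its certificate component, so $p$ is $\H$-hidden \emph{with a certificate} in every computation regardless of the real value of $p$; thus $\spec$ is realizable iff $\zug{\spec,\sec,\H}$ is realizable with certified privacy. As LTL synthesis is 2EXPTIME-hard \cite{Ros92,PR89a}, the matching lower bound follows.

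The step I expect to require the most care is verifying that the on-line restriction is faithfully captured by the reduction — that a \emph{strategy} producing $\H'$ online, rather than a mere existential certificate, corresponds exactly to certified privacy as defined (cf.\ the gap exhibited in Example~\ref{certificate example}). Concretely, one must check that the certificate $c(w_I)$ is a legitimate function of the input prefix read so far, which is automatic once $c$ is the $\H'$-labeling of an $(I/O')$-transducer, and that $\Theta$ correctly forces $\sec$ and $\sec'$ to disagree on exactly the hidden signals while agreeing on $\V$; the latter holds because $\sec'$ is obtained from $\sec$ by renaming only the signals in $\H$, so the two formulas share their $\V$-assignment automatically. Establishing these correspondences rigorously is the heart of the argument; the complexity bounds then follow mechanically from Lemma~\ref{dpw with certificates} and the Safraless synthesis machinery.
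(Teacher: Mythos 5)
Your proposal is correct and follows essentially the same route as the paper: the upper bound identifies certifying transducers with $(I/O')$-transducers for $O'=O\cup\H'$, builds the UCW $\U^\H_{\spec,\sec}$ of Lemma~\ref{dpw with certificates} for $\Theta=\spec\wedge(\sec\leftrightarrow\neg\sec')$, and solves its synthesis Safralessly via \cite{KV05c}, while the lower bound reuses the fresh-signal secret reduction from Theorem~\ref{2exp ltl} (the paper only gestures at this, but your explicit argument that the certificate can always flip $p$ is exactly the intended one). No gaps.
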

\begin{proof}
	The proof is very similar to the proof of \autoref{2exp ltl} with the only difference that the upper bound goes through UCW synthesis. By \autoref{dpw with certificates} there exists a UCW $\U^\H_{\spec,\sec}$ of exponential size that recognizes all words $w\oplus w'_\H\in (2^{I\cup O}\times 2^\H)^\omega$ that satisfy $\Theta=\spec\wedge (\sec\leftrightarrow\neg \sec')$ as defined in the proof of \autoref{dpw with certificates}. The first conjunct asserts that $w$ satisfy the specification $\spec$. Then, the conjunct $\sec\leftrightarrow\neg \sec'$ asserts that $w$ disagrees with the alternative computation $w|_\V\oplus w'_\H$ on the secret $\sec$, and thus $w'_\H$ is a certificate for $\sec$ being $\H$-hidden in $w$. We then consider the set $O'=O\cup \H'$ as a new set of output signals, and reduce the problem of synthesizing a system that realizes $\zug{\spec,\sec,\H}$ with certified privacy to the UCW synthesis of an $(I/O')$-transducer that realizes $\U^\H_{\spec,\sec}$ in the traditional sense. Note that as elaborated in the proof of \autoref{2exp ltl certificate}, assignments to $O'$ correspond to pairs of assignments, one to $O$ and another to $\H$, and so the alphabet $2^{I\cup O}\times 2^\H$ of $\U^\H_{\spec,\sec}$ is compatible with an $(I/O')$-transducer. In \cite{KV05c} the authors show that UCW synthesis can be solved in exponential time while avoiding Safra's determinization procedure (in fact it avoids determinization completely), and so we obtain the required upper bound. 
	
	Finally, the correspondence between $(I/O')$-transducers that realize $\U^\H_{\spec,\sec}$ and certifying transducers that realize $\zug{\spec,\sec,\H}$ with certified privacy is straight forward. Indeed, a labeling function $\tau':S\rightarrow 2^{O'}$ of an $(I/O')$-transducer, correspond to pair of labeling functions $\zug{\tau,c}$ of type $\tau:S\rightarrow 2^O$ and $c:S\rightarrow 2^\H$. Indeed, the natural bijection between $2^{O'}$ and $2^O\times 2^\H$ induces the needed decomposition of $\tau'$ to a pair $\zug{\tau,c}$. Formally, $\tau$ is simply the restriction of $\tau'$ to the signals in $O$ and $c$ is the restriction of $\tau'$ to the signals in $\H'$, where every signal $h\in \H$ is identified with its copy $h'\in \H'$. Thus, $c(h)=\tau(h')$ for all $h\in \H$. 
\end{proof}

\section{When the Observer Knows the Specification or the Transducer}
In this section we study two settings in which the observer has some knowledge that may help her to evaluate the secret. In the first setting, the observer knows the specification, and in the second, she knows the transducer. 

\subsection{An observer that knows the secret}
Since all the computations of the system satisfies the specification, an observer that knows the specification $\spec$ knows that only computations that satisfy $\spec$ should be taken into account when she tries to evaluate the secret. If, for example, $\spec\rightarrow\sec$, then $\sec$ cannot be kept private in a setting in which the observer knows $\spec$. Indeed, the fact $\spec$ is realized by the system reveals that $\sec$ is satisfied.
Formally, we say that $\T$ \emph{realizes $\zug{\spec,\sec, \H}$ with privacy under the knowledge of the specification}\/ if $\T$ realizes $\spec$, and for every $w_I\in (2^I)^\omega$, there exist $w^+,w^-\in \noiseH(\T(w_I))\cap L_\spec$ such that $w^+\models\sec$ and $w^-\not\models\sec$. Thus, the satisfaction of the secret $\sec$ in a computation $\T(w_I)$ cannot be deduced from the observable computation $\hideH(\T(w_I))$ even when the observer knows that $\spec$ is satisfied in $\T(w_I)$. The adjustment for the definition of the problem with respect to a hiding cost function and a budget is similar. 

We start by showing the analogue of Lemma~\ref{dpw} for the setting in which the observer knows the specification. 
The construction is similar to that of Lemma~\ref{dpw}, except that now, the construction of the DPW $\D^\H_{\sec|\spec}$ involves an existential projection on $\H$ also in the automaton for the specification. Accordingly, the size of the DPW is exponential in both the specification and the secret even in the case they are given by DBWs. 

\begin{lem}\label{dpw know spec}
	Consider two disjoint finite sets $I$ and $O$, a subset $\H\subseteq I\cup O$, and regular languages $L_\spec$ and $L_\sec$ over the alphabet $2^{I \cup O}$. There exists a DPW $\D^\H_{\sec|\spec}$ with alphabet $2^{I\cup O}$ that accepts a computation $w\in (2^{I\cup O})^\omega$ iff $w \in L_\spec$ and there exist $w^+,w^-\in \noiseH(w)$ such that $w^+ \in L_\spec\cap L_\sec$ and $w^-\in L_\spec\setminus L_\sec$. 
	\begin{enumerate}
		\item
		If $L_\spec$ and $L_\sec$ are given by \LTL formulas $\spec$ and $\sec$, then $\D^\H_{\sec|\spec}$ has $2^{2^{O(|\spec|+|\sec|)}}$ states and index $2^{O(|\spec|+|\sec|)}$. 
		\item
		If $L_\spec$ and $L_\sec$ are given by DPWs $\D_\spec$ and $\D_\sec$ with $n_\spec$ and $n_\sec$ states,
		and of indices $k_\spec$ and $k_\sec$,
		then $\D^\H_{\sec|\spec}$ has $2^{O((n_\spec\cdot k_\spec)^3\cdot (n_\sec\cdot k_\sec)^2 \log (n_\spec\cdot k_\spec\cdot n_\sec\cdot k_\sec) )}$ states and index $O((n_\spec\cdot k_\spec)^3\cdot (n_\sec\cdot k_\sec)^2)$.
	\end{enumerate}
\end{lem}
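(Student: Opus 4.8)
The plan is to mimic the construction of Lemma~\ref{dpw}, adjusting the existential projections so that they range only over completions that satisfy the specification. The key observation is that the acceptance condition can be rewritten purely in terms of $\noiseH$ applied to intersections with $L_\spec$. Indeed, since $\noiseH(w)=\set{w':w'|_\V=w|_\V}$, the existence of $w^+\in\noiseH(w)$ with $w^+\in L_\spec\cap L_\sec$ is equivalent to $w\in\noiseH(L_\spec\cap L_\sec)$, and likewise the existence of $w^-\in\noiseH(w)$ with $w^-\in L_\spec\setminus L_\sec$ is equivalent to $w\in\noiseH(L_\spec\cap L_{\neg\sec})$. Hence the language that $\D^\H_{\sec|\spec}$ must recognize is
\[
L_\spec\cap\noiseH(L_\spec\cap L_\sec)\cap\noiseH(L_\spec\cap L_{\neg\sec}).
\]
Compared with Lemma~\ref{dpw}, the only change is that the two projected components now carry the specification \emph{inside} the projection; this is exactly what forces the blow-up in $\spec$ and explains why the savings of Remark~\ref{when dbw} are no longer available. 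Note that the outer conjunct $L_\spec$ is genuinely needed, since membership of $w$ in $\noiseH(L_\spec\cap L_\sec)$ only guarantees that \emph{some} completion of $w|_\V$ satisfies $\spec$, not $w$ itself.

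For the LTL case, I would start from NGBWs $\A_\spec$, $\A_\sec$, and $\A_{\neg\sec}$ of size $2^{O(|\spec|+|\sec|)}$, form the products for $L_\spec\cap L_\sec$ and $L_\spec\cap L_{\neg\sec}$ (still of size $2^{O(|\spec|+|\sec|)}$), and apply Lemma~\ref{noise nxw} to each, obtaining NGBWs $\A^\H_{\spec\wedge\sec}$ and $\A^\H_{\spec\wedge\neg\sec}$ for $\noiseH(L_\spec\cap L_\sec)$ and $\noiseH(L_\spec\cap L_{\neg\sec})$ of the same size. Taking the product of these two automata with $\A_\spec$ yields an NGBW for the required language, of size $2^{O(|\spec|+|\sec|)}$ and index $O(|\spec|+|\sec|)$, and determinizing it via \cite{Saf88,Pit06} gives a DPW with $2^{2^{O(|\spec|+|\sec|)}}$ states and index $2^{O(|\spec|+|\sec|)}$, as claimed in item~(1).

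For the DPW case, the same pipeline runs over NBWs. From $\D_\sec$ I obtain NBWs $\A_\sec$ and $\A_{\neg\sec}$ for $L_\sec$ and $L_{\neg\sec}$, each of size $O(n_\sec\cdot k_\sec)$, and from $\D_\spec$ an NBW $\A_\spec$ of size $O(n_\spec\cdot k_\spec)$. The crucial difference from Lemma~\ref{dpw} is that before applying Lemma~\ref{noise nxw} I first intersect with $\A_\spec$: the products for $L_\spec\cap L_\sec$ and $L_\spec\cap L_{\neg\sec}$ each have size $O(n_\spec\cdot k_\spec\cdot n_\sec\cdot k_\sec)$, and the $\noiseH$ construction preserves this size. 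The final product of $\A_\spec$ with these two noised automata therefore has size $O\big((n_\spec\cdot k_\spec)^3\cdot (n_\sec\cdot k_\sec)^2\big)$, since the factor $n_\spec\cdot k_\spec$ now occurs three times. Determinizing this NBW gives the stated bounds in item~(2), using that determinization of an NBW with $m$ states yields a DPW with $2^{O(m\log m)}$ states and index $O(m)$.

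The correctness argument is routine given the reformulation above, so the main point requiring care is the complexity accounting in the DPW case, and in particular justifying the cube $(n_\spec\cdot k_\spec)^3$. This cube is precisely the manifestation of the specification appearing once outside the projections and twice inside them, and it is the reason the construction is exponential in $\spec$ even when $\spec$ is given by a DBW, in contrast to the situation in Lemma~\ref{dpw}.
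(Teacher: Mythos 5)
Your proposal is correct and follows essentially the same route as the paper: as in Lemma~\ref{dpw}, but applying the $\noiseH$ construction of Lemma~\ref{noise nxw} to automata for the intersections $L_\spec\cap L_\sec$ and $L_\spec\cap L_{\neg\sec}$ (rather than to $L_\sec$ and $L_{\neg\sec}$ alone), then intersecting with $\A_\spec$ and determinizing, which is exactly where the cube $(n_\spec\cdot k_\spec)^3$ and the unavoidable exponential dependence on $\D_\spec$ come from. Your reformulation of the target language as $L_\spec\cap\noiseH(L_\spec\cap L_\sec)\cap\noiseH(L_\spec\cap L_{\neg\sec})$ and the complexity accounting both match the paper's construction and stated bounds.
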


Lemma~\ref{dpw know spec} implies that all the asymptotic upper bounds described in Section~\ref{solve syn priv} are valid also in a setting with an observer that knows the specification. Also, as the lower bounds in Theorems~\ref{2exp ltl}  and~\ref{exp dbw} involve secrets that are independent of the specification, they are valid for this setting too. Two issues require a consideration: 
\begin{enumerate}
	\item
	The need to search for ${\mathcal H}$: the NP-hardness proof in Theorem~\ref{np} is no longer valid, as there, $\spec\rightarrow \lnot\sec$, and so the satisfaction value of the secret is revealed in a setting with an observer that knows the specification. 
	\item
	The construction in Lemma~\ref{dpw know spec} results in an algorithm that is exponential also in the specification, even when given by a DBW. On the other hand, the EXPTIME-hardness proof in \autoref{lb dbw hiding} does not imply an exponential lower bound in the specification. 
\end{enumerate}

Below we address the two issues, providing lower bounds for a setting in which the observer knows the specification. Matching upper bounds follow the same considerations in Theorems~\ref{exp dbw} and~\ref{np}, where $\D^\H_{\sec|\spec}$ replaces $\D^\H_{\spec,\sec}$.
We start with a variant of Theorem~\ref{np}, showing NP-hardness also in the setting of a knowledgeable observer. As mentioned above, the lower bound in the proof of \autoref{np} does not work when the observer knows the specification, yet,  can easily be modified to work for the case of a knowledgeable observer.
\begin{thm}\label{np knowledge}
	Given a set $O$ of output signals, a cost function $\cost:O\rightarrow \Nat$, a hiding budget $b\in \Nat$, and DBWs $\A_\spec$ and $\A_\sec$ over $2^O$, deciding whether there is $\H\subseteq O$, with $\cost(\H)\leq b$, such that $\zug{\A_\spec,\A_\sec,\H}$ is realizable with privacy under knowledge of the specification is NP-hard. Moreover, hardness holds already when $\cost$ is uniform.
\end{thm}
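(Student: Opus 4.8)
The plan is to reduce from vertex cover, reusing the gadget from the proof of Theorem~\ref{np} but repairing the specification so that it no longer implies $\neg\sec$. Recall that there we work in the closed setting with $O=V$, a uniform cost function, budget $k$, and the DBW $\A_\sec$ over $2^V$ whose language $L_\sec$ consists of all words $S_1\cdot S_2\cdots$ with $S_i\cap e_i\neq\varnothing$ for every $i\leq m$, so that accepted words encode vertex covers. The only reason that reduction fails under knowledge of the specification is that its specification $\{\varnothing^\omega\}$ satisfies $\spec\rightarrow\neg\sec$, so an observer who knows $\spec$ already knows $\sec$ is false. I would therefore keep $\A_\sec$ and the cost/budget unchanged, and only replace the specification by the language $L_\spec=\{\varnothing^\omega\}\cup L_\sec$.

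The first step is to check that $L_\spec$ is recognized by a small DBW. Since $S_1=\varnothing$ fails to cover $e_1\neq\varnothing$, we have $\varnothing^\omega\notin L_\sec$, so a word lies in $L_\spec$ iff it is $\varnothing^\omega$ or its first $m$ letters cover $e_1,\ldots,e_m$. A deterministic automaton can separate these two cases already at the first letter: from the initial state $q_1$, reading $\varnothing$ it moves to an accepting ``empty sink'' that loops on $\varnothing$ and dies on any nonempty letter, while reading a letter that meets $e_1$ it enters the covering track $q_2,\ldots,q_{m+1}$ of $\A_\sec$. This yields a DBW with $m+2$ states, and one checks directly that it accepts exactly $\{\varnothing^\omega\}\cup L_\sec$.

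The heart of the argument is the equivalence, for every $\H\subseteq V$, that $\zug{\A_\spec,\A_\sec,\H}$ is realizable with privacy under knowledge of the specification iff $\H$ is a vertex cover of $G$. In the closed setting this realizability amounts to the existence of a visible word $u\in(2^\V)^\omega$ together with $w^+\in L_\spec\cap L_\sec$ and $w^-\in L_\spec\setminus L_\sec$ satisfying $w^+|_\V=w^-|_\V=u$. The crucial observation is that $L_\spec\setminus L_\sec=\{\varnothing^\omega\}$, since the only word added to $L_\sec$ is $\varnothing^\omega$, and it is not in $L_\sec$. Hence $w^-$ is forced to be $\varnothing^\omega$, which pins $u$ to the all-false visible word; then $w^+\in L_\sec$ with $w^+|_\V=u$ means that every true signal of $w^+$ is hidden, and since $w^+$ covers every $e_i$ this gives $\H\cap e_i\neq\varnothing$ for all $i$, i.e.\ $\H$ is a vertex cover. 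Conversely, if $\H$ is a vertex cover, then choosing at each position $i$ a hidden endpoint of $e_i$ to be true yields a covering word $w^+\in L_\sec\subseteq L_\spec$ whose visible projection is all-false, so $w^+$ together with $w^-=\varnothing^\omega$ witnesses realizability (the transducer simply generates $\varnothing^\omega$). Combined with the uniform cost and budget $k$, this gives that $G$ has a vertex cover of size $k$ iff $\zug{\A_\spec,\A_\sec,\cost,k}$ is realizable with privacy under knowledge of the specification, establishing NP-hardness.

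The step I expect to be the main obstacle is the hard direction of the equivalence: ruling out, when $\H$ is not a vertex cover, any way for the system to hide the secret. The danger is a ``single-edge loophole''—a specification loose enough to let the system pick a non-covering completion $w^-$ that fails to cover only one edge, which would force only that one edge to be covered by $\H$ rather than all of them. Making $L_\spec\setminus L_\sec$ a singleton is exactly what forecloses this: it pins $w^-$ to $\varnothing^\omega$ and hence $u$ to the all-false visible word, so that $w^+$ must reconstruct the entire cover using hidden signals alone. I would also stress that, unlike in Theorem~\ref{np}, the new specification does not reveal the secret in the positive case, since $L_\spec\cap L_\sec=L_\sec\neq\varnothing$ supplies both a satisfying and a non-satisfying completion consistent with the all-false observation, which is precisely what the knowledgeable-observer setting requires.
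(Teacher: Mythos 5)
Your proposal is correct and matches the paper's own proof essentially exactly: the paper also reduces from vertex cover by keeping the covering-words secret $L_\sec$ and enlarging the specification to $L_\spec=\{\varnothing^\omega\}\cup L_\sec$, with the transducer generating $\varnothing^\omega$ and the all-true/all-false completions of the hidden signals serving as the witnesses $w^+,w^-$. The only cosmetic difference is that the paper argues the hard direction contrapositively (if $\H$ misses an edge $e_i$, the observer reads the visible letter $\kappa_i$ and resolves the secret either way), whereas you argue it directly by noting $L_\spec\setminus L_\sec=\{\varnothing^\omega\}$ pins $w^-$ and hence the visible word; these are the same argument.
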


\begin{proof}
	We describe a polynomial-time reduction from the vertex cover problem.
	Consider an undirected graph $G=\zug{V,E}$, with $E=\set{e_0,e_2,\ldots,e_m}$. We construct DBWs $\A_\spec$ and $\A_\sec$ over the alphabet $2^V$, such that for all $\H\subseteq V$, it holds that $\zug{\A_\spec,\A_\sec,\H}$ is realizable with privacy under knowledge of the specification iff $\H$ is a vertex cover of $G$.
	Let $L^1_\spec=\set{\varnothing^\omega}$ and $L^2_\spec$ be the set of all words $w=\sigma_0\cdot \sigma_1\cdot \sigma_2\cdots\in (2^{V})^\omega$ such that $\sigma_i\cap e_i\neq \varnothing$, for all $0\leq i\leq m$. Now, let $L_\spec=L^1_\spec\cup L^2_\spec$, and $L_\sec=L^2_\spec$. It is not hard to see that there are DBWs $\A_\spec$ and $\A_\sec$ with $O(|E|)$ states that recognize $L_\spec$ and $L_\sec$, respectively, and can be constructed in polynomial time.
	We argue that $\zug{\A_\spec,\A_\sec,\H}$ is realizable with privacy under knowledge of the specification iff $\H$ is a vertex cover of $G$. Thus, $G$ has a vertex cover of size $b$ iff $\zug{L_\spec,L_\sec,\cost,b}$ is realizable with privacy under the knowledge of the specification, for the uniform cost function that assign $1$ to all $v\in V$.

	For the first direction, consider a vertex cover $\H$ of $G$. Let $\T$ be a transducer that generates the computation ${\varnothing}^{\omega}$ and hides the signals in $\H$.
	In every step of the computation, the observer sees that the signals in $V\setminus \H$ are assigned $\False$.
	Assigning $\False$ to the signals in $\H$ results in a word $w^-$ that is in $L^1_\spec\setminus L^2_\spec=L_\spec\setminus L_\sec$.
	On the other hand, as $\H$ is a vertex cover, and so it intersects every edge $e\in E$, then assigning $\True$ to the signals in $\H$ results in a word $w^+$ that is in $L^2_\spec = L_\spec\cap L_\sec$. 
	Hence, the membership of the computation in $L_\sec$ is not known to the observer, even when she knows it is in $L_\spec$. 
	That is, $\T$ raelizes $\zug{\A_\spec,\A_\sec,\H}$ with privacy under knowledge of the specification.
	
	In the other direction, consider a set $\H \subseteq V$ that is not a vertex cover of $G$.
	Consider a transducer $\T$ that realizes $L_\spec$ and hides the signals in $\H$. Let $w=\sigma_0\cdot \sigma_1\cdot \sigma_2\cdots\in (2^V)^\omega$ be the computation generated by the transducer. Since $\H$ is not a vertex cover of $G$, there is $0\leq i\leq m$ such that $\H\cap \{e_i\}=\varnothing$. Let $\kappa_i=\hideH(\sigma_i)$. We claim that the membership of $w$ to $L_\sec$ can be revealed from $\kappa_i$ when the observer knows that $w\in L_\spec$. Indeed, if $\kappa_i\cap e_i\neq \varnothing$, then $\sigma_i\neq \varnothing$ and so $w\neq \varnothing^\omega$. Hence, $w\in L_\spec\setminus L^1_\spec=L_\sec$. Also, if $\kappa_i\cap e_i=\varnothing$, then sa $\H\cap e_i=\varnothing$, we have that $\sigma_i\cap e_i=\varnothing$. Thus, $w\in L_\spec\setminus L^2_\spec=L_\spec\setminus L_\sec$.
\end{proof}

We continue to the second issue, proving that synthesis with privacy under knowledge of the specification is EXPTIME-hard even for specifications in DBWs and secrets of a fixed size. Note that synthesis with privacy (without knowledge of the specification) can be solved in PTIME in this case (see Remark~\ref{when dbw}). 
The proof is similar to that of \autoref{lb dbw hiding}, except that here the lower bound needs the secret to be a of a fixed size, making the specification more complex. It follows that the exponential blow-up in $\D_\spec$, which exists in Lemma~\ref{dpw know spec} cannot be avoided even when $\D_\spec$ is a deterministic B\"uchi (rather than parity) automaton and $\D_\sec$ is of a fixed size.

\begin{thm}\label{exp fixed} 
	Given two disjoint finite sets $I$ and $O$, DBWs $\D_\spec$ and $\D_\sec$ over $2^{I\cup O}$, and a set $\H\subseteq I\cup O$ of hidden signals, deciding whether $\zug{\D_\spec,\D_\sec,\H}$ is realizable with privacy under knowledge of the specification is EXPTIME-hard already when $\D_\sec$ is of fixed size.
\end{thm}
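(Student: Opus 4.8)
The plan is to reduce from NBW realizability, which is EXPTIME-hard~\cite{Rab72,HKKM02}, just as in the proof of \autoref{lb dbw hiding}. The essential difference is that there the transition structure of the NBW was encoded into the \emph{secret} (making it large), whereas here the secret must stay of fixed size. I therefore push the structure of the NBW into the \emph{specification} and exploit the observer's knowledge of it. Given an NBW $\A=\zug{2^{I\cup O},Q,q_0,\delta,\alpha}$ whose realizability we wish to decide, I assume w.l.o.g.\ (by adding a transition to a rejecting sink, which does not change $L(\A)$) that $\A$ is total and that every word in $(2^{I\cup O})^\omega$ has a rejecting run. I introduce hidden signals $\H=\H_{\it state}\cup\set{acc}$, where $\H_{\it state}$ encodes the states in $Q$ and $acc$ is a fresh signal, and make all of $\H$ \emph{output} signals, so the constructed instance is over input set $I$ and output set $O\cup\H$, with visible signals $\V=I\cup O$.

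The secret is the fixed-size DBW $\D_\sec$ for $\Alw\Ev acc$ (``$acc$ holds infinitely often''), which has two states and is independent of $\A$. The specification $\D_\spec$ is a safety DBW, hence of size polynomial in $\A$, enforcing two local invariants along a computation: first, that the states encoded in $\H_{\it state}$ form a legal run of $\A$ on the visible word $w\in(2^{I\cup O})^\omega$ (it remembers the previously encoded state $r_j$ in its control and checks $r_{j+1}\in\delta(r_j,\sigma_j)$); and second, that $acc$ is set correctly, namely $acc$ holds at step $j$ iff $r_{j+1}\in\alpha$. On computations satisfying $\D_\spec$, a run is accepting iff $acc$ holds infinitely often, so the secret $\Alw\Ev acc$ exactly captures acceptance of the encoded run.

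I would then prove that $L(\A)$ is realizable iff $\zug{\D_\spec,\D_\sec,\H}$ is realizable with privacy under knowledge of the specification. For the first direction, from an $(I/O)$-transducer realizing $L(\A)$ I build an $(I/(O\cup\H))$-transducer that outputs the same $O$-assignments and, greedily (using totality of $\A$), emits an encoding of some legal run with the correct $acc$; this realizes $\D_\spec$. For hiding, fix an input $w_I$; the generated visible word $w=\T(w_I)|_\V$ is in $L(\A)$, so it has an accepting run, whose encoding yields $w^+\in\noiseH(\T(w_I))\cap L_\spec$ with $w^+\models\Alw\Ev acc$, and it also has a rejecting run, yielding $w^-\in\noiseH(\T(w_I))\cap L_\spec$ with $w^-\not\models\Alw\Ev acc$. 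Conversely, projecting any transducer realizing $\zug{\D_\spec,\D_\sec,\H}$ onto $O$ gives an $(I/O)$-transducer; the accepting-run completion $w^+\in L_\spec$ forces the visible word to lie in $L(\A)$ on every input, so the projection realizes $L(\A)$.

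The point I expect to need the most care is arguing why the reduction genuinely requires the observer's knowledge of the specification, and correspondingly why realizing $\D_\spec$ is never the binding constraint. Since $acc\in\H$ is a freely variable hidden signal, without the intersection with $L_\spec$ the secret $\Alw\Ev acc$ could be forced true or false by toggling $acc$ alone, so hiding would be trivial; it is precisely the restriction of the witnessing completions to $L_\spec$ that ties $acc$ to the genuine acceptance status of a legal run, thereby encoding nonemptiness of accepting runs, i.e.\ membership in $L(\A)$. Realizability of $\D_\spec$ alone is always achievable, as the system can emit any legal run (one exists by totality), so the only real obstacle is producing visible words that admit an accepting run, which is exactly realizability of $L(\A)$. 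The matching EXPTIME upper bound follows, as already noted, from the algorithm behind \autoref{exp dbw} with $\D^\H_{\sec|\spec}$ of Lemma~\ref{dpw know spec} in place of $\D^\H_{\spec,\sec}$.
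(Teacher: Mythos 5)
Your proposal is correct and follows essentially the same route as the paper: a reduction from NBW realizability in which the hidden signals encode the states of $\A$, the (polynomial-size) specification checks that the encoded state sequence is a legal run, and the fixed-size secret checks only acceptance of that run, so that the observer's knowledge of the specification is what ties the secret to genuine acceptance. Your extra signal $acc$, with the specification forcing it to mark visits to accepting states so that the secret's transitions are independent of $\A$, is precisely the refinement described in the paper's footnote to this proof; your write-up just fills in the details (hidden signals as outputs, the greedy run-tracking transducer, and the projection argument) more explicitly than the paper does.
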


\begin{proof}
	We describe a polynomial-time reduction from NBW realizability: given an NBW $\A$ over $2^{I\cup O}$, we construct a new set of signals $\H$, and DBWs $\D_\spec$ and $\D_\sec$ over the alphabet $2^{I\cup O\cup \H}$, such that $\D_\sec$ has only two states and $L(\A)$ is realizable iff  $\zug{\D_\spec,\D_\sec,\H}$ is realizable with privacy under knowledge of the specification. 
	
	As in the proof of Theorem~\ref{lb dbw hiding}, the signals in $\H$ encode the states of $\A$, and thus we refer to words in $(2^{I\cup O\cup \H})^\omega$ as a combination $w\oplus r\in (2^{I\cup O}\times Q)^\omega$, where $w\in (2^{I\cup O})^\omega$ and $r\in Q^\omega$. Also, we assume that $\A$ has a single initial state and that every word in $(2^{I\cup O})^{\omega}$ has at least one rejecting run in $\A$. 
	Consider a word $w \oplus r \in (2^{I\cup O}\times Q)^\omega$. The key difference between the proof of Theorem~\ref{lb dbw hiding} and the one here, is that now the specification requires the $Q$-component $r$ to be a legal run of $\A$ on the  $2^{I\cup O}$-component $w$. Then the secret only requires $r$ to be accepting. Thus, $\D_\sec$ only needs two states to detect infinitely many visits in accepting states of $\A$.\footnote{Note that while $\D_\psi$ only has two states, its representation requires an encoding of $Q$. If one wants to get rid of this, it is possible to add a new signal $p_{\it acc}$ to $\H$, and use the specification in order to require $p_{\it acc}$ to hold exactly when the encoded state is accepting. That way, $\D_\sec$ has two states and its transitions depend only on $p_{\it acc}$.}	
	
	We prove that $L(\A)$ is realizable iff  $\zug{\D_\spec,\D_\sec,\H}$ is realizable with privacy under knowledge of the specification. Indeed, as in the proof of Theorem~\ref{lb dbw hiding}, a transducer $\T$ realizes $L(\A)$ iff $\T$ realizes $\zug{\D_\spec,\D_\sec,\H}$ with privacy under knowledge of the specification.
\end{proof}

\subsection{An observer that knows the transducer}

Recall that an observer that knows the specification can evaluate the secret only with respect to computations that satisfy the specification. In fact, the observer can do better, and restricts the search to computations that are generated by an $(I/O)$-transducer that realizes the specification. In this section we show that such a restriction is indeed stronger, and study an even stronger setting, in which the observer not only knows that computations are generated by an $(I/O)$-transducer that realizes the specification, but actually knows the $(I/O)$-transducer $\T$. Note that even though the observer knows all the components of $\T$, she still may not be able to evaluate the secret, as she sees only the signals not in $\H$. Note also that since $\T$ realizes the specification $\spec$, then $L(\T) \subseteq L_\spec$. As an observer that knows $\T$ also knows $L(\T)$, the latter implies that knowing $\T$ is more helpful for the observer than knowing the specification. 

We first demonstrate that an observer that takes into an account the fact that the computations are generated by an $(I/O)$-transducer that realizes $\spec$ can reveal secrets that are not revealed by an observer that only takes into account the fact that all the computations satisfy $\spec$. In order to see the difference between the two types of observers, consider the case $I=\H=\{p_1,p_2\}$, $O=\{q\}$, $\varphi=(q\leftrightarrow p_1) \lor  Gp_2$, and $\psi=p_1$.
An observer that only knows that all the computations satisfy $\spec$ does not know which of its two disjuncts is satisfied, and thus, even though she observes $q$, the value of $p_1$ stays secret. Formally, a transducer that realizes $\phi$ $\H$-hides $\psi$ from the observer, even if the observer knows that $\varphi$ is satisfied. Indeed,  for every observable computation $\kappa \in 2^{\{q\}}$, there is a computation $w^+\in \noiseH(\kappa)$ that satisfies $p_1\land Gp_2$ and a computation $w^-\in \noiseH(\kappa)$ that satisfies $(\neg p_1)\land Gp_2$. Hence, $\zug{\spec,\sec,\H}$ is realizable with privacy even when the observer knows the specification. 

On the other hand, an observer that knows that the computations are generated by a transducer $\T$ that realizes $\spec$ can do more. To see this, note that the specification $\varphi$ is {\em open equivalent} to the formula $q \leftrightarrow p_1$: for every transducer $\T$, we have that $\T$ realizes $\varphi$ iff $\T$ realizes $q \leftrightarrow p_1$. Indeed, if $\T$ does not realize $q \leftrightarrow p_1$, then $\varphi$ is not satisfied in computations that do not satisfy $Gp_2$, which is the case for almost all the computations of $\T$. Accordingly, an observer that knows that the computations are generated by a transducer that realizes $\spec$ also knows that they are generated by a  transducer that realizes $q \leftrightarrow p_1$, and so she can learn the value of the secret $p_1$ by observing the value of $q$.\footnote{Note that the example uses the fact that the specification $\varphi$ is {\em inherently vacuous}; namely that it is open equivalent to a simpler formula \cite{GBJV08,FKSV08}. The question whether there are also examples in which the two types of an observer differ for a specification that is not inherently vacuous is left open.}

We continue to an observer that knows $\T$. Formally, we say that $\T$ $\H$-hides $\sec$ from an observer that knows $\T$ if for all $w_I\in (2^I)^\omega$, there is $w'_I\in (2^I)^\omega$ such that $\hideH(\T(w_I))=\hideH(\T(w'_I))$ and the two computations $\T(w_I)$ and $\T(w'_I)$ do not agree on $\sec$.

Note that $\T$ does not $\H$-hide $\sec$ from an observer that knows $\T$ if there is $w_I\in (2^I)^\omega$ such that $\noiseH(\T(w_I))\cap L(\T)\subseteq\sec$ or $\noiseH(\T(w_I))\cap L(\T)\subseteq\neg\sec$.
Indeed, if such a word $w_I$ exists, then the set $\noiseH(\T(w_I))\cap L(\T)$, namely the set of all the computations that are possible interactions of $\T$ with $w_I$ from the point of view of the observer, reveals the satisfaction value of the secret $\sec$. Technically, rather than considering the set $\noiseH(\T(w_I))$ of computations, we consider its restriction $\noiseH(\T(w_I))\cap L(\T)$ to computations generated by $\T$. As discussed above, since $L(\T) \subseteq L_\spec$, such a restriction is stronger than the one induced by knowing $\spec$.  

We consider the problem of checking whether $\T$ $\H$-hides $\sec$ from an observer that knows $\T$.
The input to the problem is a transducer $\T$, an LTL formula $\sec$, and a set $\H\subseteq I\cup O$, and the mission is to decide whether $\T$ $\H$-hides $\sec$ from an observer that knows $\T$.

\begin{thm}
	Given an $(I/O)$-transducer $\T$, an LTL formula $\sec$, 
	and a set $\H\subseteq I\cup O$ of hidden signals, checking whether $\T$ $\H$-hides $\sec$ from an observer that knows $\T$ can be solved in EXPSPACE. 
\end{thm}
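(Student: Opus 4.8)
The plan is to reduce the problem to two language-containment checks between nondeterministic B\"uchi automata of exponential size, and then rely on the fact that NBW containment is decidable in space polynomial in the sizes of the automata \cite{SVW87}.

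First, I would reformulate the hiding condition in terms of visible words. Let $V(\T)=\set{\T(w_I)|_\V : w_I\in (2^I)^\omega}$ denote the set of visible computations of $\T$, and let $A_+=\set{w|_\V : w\in L(\T)\text{ and }w\models \sec}$ and $A_-=\set{w|_\V : w\in L(\T)\text{ and }w\models \neg\sec}$ be the sets of visible words admitting, respectively, a $\sec$-completion and a $\neg\sec$-completion inside $L(\T)$. Using the characterization stated just before the theorem, $\T$ fails to $\H$-hide $\sec$ from an observer that knows $\T$ exactly when there is $w_I$ with $\noiseH(\T(w_I))\cap L(\T)\subseteq L_\sec$ or $\noiseH(\T(w_I))\cap L(\T)\subseteq L_{\neg\sec}$; the first disjunct says that $u=\T(w_I)|_\V$ lies in $V(\T)$ but admits no $\neg\sec$-completion in $L(\T)$, i.e. $u\in V(\T)\setminus A_-$, and symmetrically for the second. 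Hence $\T$ $\H$-hides $\sec$ from an observer that knows $\T$ iff $V(\T)\subseteq A_+$ and $V(\T)\subseteq A_-$. Since every computation of $\T$ satisfies either $\sec$ or $\neg\sec$, we have $V(\T)=A_+\cup A_-$, so this is in fact equivalent to $A_+=A_-$.

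Next I would build the three automata. The language $L(\T)$ is recognized by a deterministic automaton of size $|\T|$ in which all states are accepting, obtained by reading letters of $2^{I\cup O}$, following $\eta$ on the $I$-part and verifying that the $O$-part agrees with $\tau$. Projecting its transitions onto $\V$ gives an NBW for $V(\T)$ of size $O(|\T|)$. For $A_+$, I would take the product of this deterministic automaton for $L(\T)$ with a standard NBW $\B_\sec$ for $L_\sec$, which has $2^{O(|\sec|)}$ states, and then apply an existential projection onto $\V$ that guesses the hidden signals, exactly as in Lemma~\ref{noise nxw}; this yields an NBW for $A_+$ with $O(|\T|\cdot 2^{O(|\sec|)})$ states. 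The automaton for $A_-$ is obtained identically from an NBW $\B_{\neg\sec}$ for $L_{\neg\sec}$.

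Finally, I would decide the two containments $V(\T)\subseteq A_+$ and $V(\T)\subseteq A_-$. Each is a containment between NBWs whose sizes are $O(|\T|)$ and $O(|\T|\cdot 2^{O(|\sec|)})$, and NBW containment can be decided in space polynomial in the sizes of the two automata \cite{SVW87}, by complementing the right-hand automaton on the fly by a rank- or subset-based construction and testing emptiness of the product. Since the right-hand automata are only exponential in $|\sec|$, polynomial space in their size is exponential space in the input, giving the EXPSPACE bound. The main obstacle is exactly this last point: a naive solution that materializes the complement of $A_-$ (or $A_+$) would be doubly exponential, so the EXPSPACE bound hinges on performing the containment test in space polynomial in the already-exponential automaton sizes, never writing the complement down explicitly. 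The only other care needed is in the construction of $V(\T)$, $A_+$ and $A_-$, where the existential projection onto the visible signals must correctly treat hidden signals occurring in both $I$ and $O$.
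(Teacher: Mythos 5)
Your proposal is correct, but it takes a genuinely different route from the paper's. The paper starts from the same characterization (hiding fails iff some $w_I$ satisfies $\noiseH(\T(w_I))\cap L(\T)\subseteq L_\sec$ or $\noiseH(\T(w_I))\cap L(\T)\subseteq L_{\neg\sec}$), but it stays with input words: it hand-builds two universal co-B\"uchi automata $\U^+$ and $\U^-$ over the alphabet $2^I$, with state space $S\times S\times Q$ -- tracking the run of $\T$ on $w_I$, a run of $\T$ on an alternative input $w'_I$ whose computation is observation-equivalent, and a run of a UCW for $\sec$ on that alternative computation -- so that $L(\U^+)=\{w_I : \noiseH(\T(w_I))\cap L(\T)\subseteq L_\sec\}$ and symmetrically for $\U^-$; it then checks nonemptiness of these exponential-size automata, which is PSPACE in their size, giving EXPSPACE. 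You instead move to the visible alphabet $2^\V$, where the quantification over completions becomes language containment: hiding holds iff $V(\T)\subseteq A_+$ and $V(\T)\subseteq A_-$, with $V(\T)$, $A_+$, $A_-$ obtained by simple products and existential projection (in the spirit of Lemma~\ref{noise nxw}), and the universal quantification is delegated to the standard PSPACE containment test with on-the-fly complementation. The two solutions are essentially dual -- checking $V(\T)\subseteq A_-$ amounts to checking $L(\U^+)=\varnothing$, and the complementation inside your containment test plays the role of the paper's universal branching -- but yours is more modular: it needs no bespoke automaton construction and no run-level correctness argument, relying only on closure under product and projection and the known complexity of NBW containment. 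The paper's construction, in turn, makes the pairing of $w_I$ with the witnessing $w'_I$ explicit in the state space, which your containment check performs implicitly. Your closing observation that hiding is equivalent to $A_+=A_-$ is a nice compact reformulation that the paper does not state.
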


\begin{proof}
	Consider an LTL formula $\sec$, a set $\H\subseteq I\cup O$, and an $(I/O)$-transducer $\T=\zug{I,O,S,s_0,\mu,\tau}$.
	Let $\U_\sec=\zug{2^{I\cup O}, Q, q_0, \delta, \alpha}$ be a UCW for $\sec$.
	There is such a UCW of size exponential in $|\sec|$ \cite{VW94}.
	We construct two UCWs $\U^+$ and $\U^-$ over the alphabet $2^I$, 
	such that $L(\U^+)=\{w_I : \noiseH(\T(w_I))\cap L(\T)\subseteq\sec\}$ and $L(\U^-)=\{w_I : \noiseH(\T(w_I))\cap L(\T)\subseteq\neg\sec\}$.
	Then, $\T$ $\H$-hides $\sec$ from an observer that knows $\T$ iff $L(\U^+)\neq\varnothing$ or $L(\U^-)\neq\varnothing$.
	
	We now describe the UCW $\U^+$.
	Reading a word $w_I\in 2^I$, the UCW $\U^+$ traces every word $w'_I$ with $\T(w'_I)\in\noiseH(\T(w_I))\cap L(\T)$.
	The run corresponding to $w'_I$ is accepting iff $\T(w'_I)\in\sec$.
	Thus, all of the runs on $w_I$ are accepting iff $\noiseH(\T(w_I))\cap L(\T)\subseteq\sec$.
	The state space of $\U^+$ is $S\times S\times Q$.
	The first $S$-component of each state is used to simulate the run of $\T$ on $w_I$.
	The second component$S$-component is used to simulate a run of $\T$ on a word in $\noiseH(\T(w_I))$.
	That is, after reading a word $x\in (2^I)^*$, each run of $\U^+$ reaches a state of the form $\zug{\mu^*(s_0,x),s_2,q}$, where $s_2= \mu^*(s_0,x')$ for some word $x'\in 2^I$ with $\T(x')\in\noiseH(\T(x))$.
	The $Q$-component of each state of $\U^+$ simulates the runs of $\U_\sec$ on words in $\noiseH(\T(w_I))$.
	That is, $q\in\delta^*(q_0, \T(x'))$ for some word $x'\in 2^I$ with $\T(x')\in\noiseH(\T(x))$.
	
	Formally, $\U^+=\zug{2^I,S\times S\times Q, \zug{s_0, q_0},\delta', \alpha'}$, 
	where for all $\zug{s_1,s_2,q}\in S\times S\times Q$ and $i\in 2^I$, we have that $\zug{s_1,s_2,q}\in\alpha'$ iff $q\in\alpha$, and $\delta'(\zug{s_1,s_2,q},i)=\{\zug{s_1',s_2',q'} : s_1'=\mu(s_1,i)\wedge \exists i'\in\noiseH(i)\text{ s.t. }	(s_2'=\mu(s_2,i')\wedge\tau(s_2')\in\noiseH(\tau(s_1'))\wedge q'\in\delta(q,i'\cup\tau(s_2')))\}$.
	
	We prove that the language of $\U^+$ is exactly all words $w_I$ with $\noiseH(\T(w_I))\cap L(\T)\subseteq\sec$.
	For the first direction, consider a word $w_I=i_0\cdot i_1\cdot\ldots\in L(\U^+)$.
	That is, all runs of $\U^+$ on $w_I$ are accepting.
	Let $w'=(i'_0\cup o'_0)\cdot (i'_1\cup o'_1)\cdot\ldots\in \noiseH(\T(w_I))\cap L(\T)$.
	We show that $w'\in\sec$.
	Let $q_0\cdot q_1\cdot\ldots$ be a run of $\U_\sec$ on $w'$.
	We show that this run is accepting.
	Let $s_0\cdot s_1\cdot\ldots$ be the run of $\T$ on $w_I$, and let $s'_0\cdot s'_1\cdot\ldots$ be the run of $\T$ on $w'|_I=i'_0\cdot i'_1\cdot\ldots$.
	Since $w'\in L(\T)$, we have that $o'_j=\tau(s'_{j+1})$, for all $j\geq 0$.
	Also, since $w'\in \noiseH(\T(w_I))$, we have that $i'_j\in\noiseH(i_j)$ and $o'_j\in\noiseH(\tau(s_{j+1}))$, for all $j\geq 0$.
	Thus, by definition of $\U^+$, we have that $\zug{s_0, s'_0,q_0}\cdot \zug{s_1, s'_1,q_1}\cdot\ldots$ is a run of $\U^+$ on $w_I$.
	Since all runs of $\U^+$ on $w_I$ are accepting, we get that $q_j\in\alpha$ for finitely many $j$-s.
	Thus, the run $q_0\cdot q_1\cdot\ldots$ is accepting as needed.
	
	For the second direction, consider a word $w_I=i_0\cdot i_1\cdot\ldots\in(2^I)^\omega$ with $\noiseH(\T(w_I))\cap L(\T)\subseteq\sec$.
	We show that $w_I\in L(\U^+)$.
	Let $r=\zug{s_0, s'_0,q_0}\cdot \zug{s_1, s'_1,q_1}$ be a run of $\U^+$ on $w_I$. 
	We show that this run is accepting.
	First, by the definition of $\U^+$, we have that $s_0\cdot s_1\cdot\ldots$ is the run of $\T$ on $w_I$.
	For all $j\geq 0$, let $o_j$ be $\tau(s_{j_1})$.
	By the definition of $\U^+$, there is a word $w'=(i'_0\cup o'_0)\cdot (i'_1\cup o'_1)\cdot\ldots$ with $i'_j\in\noiseH(i_j)$, $o'_j=\tau(s'_{j+1})$ and $o'_j\in\noiseH(o_j)$ for all $j\geq 0$, for which $q_{j+1}\in\delta(q_j, i'_j\cup o'_j)$.
	Since $o'_j=\tau(s'_{j+1})$ for all $j\geq 0$, we get that $w'\in L(\T)$.
	Since $i'_j\in\noiseH(i_j)$ and $o'_j\in\noiseH(o_j)$ for all $j\geq 0$, we get that $w'\in \noiseH(\T(w_I))$.
	Overall, $w'\in\noiseH(\T(w_I))\cap L(\T)$, and by that $w'\in\sec$.
	Moreover, since $q_{j+1}\in\delta(q_j, i'_j\cup o'_j)$ for all $j\geq 0$, we get that $q_o\cdot q_1\cdot\ldots$ is a run of $\U_\sec$ on $w'$, and thus accepting.
	That is, $q_j\in\alpha$ for only finitely many $j$-s, and so $r$ is accepting as needed.

	We construct the UCW $\U^-$ in a similar way, using $\U_{\neg\sec}$ instead of $\U_\sec$. Recall that the observer wins iff $L(\U^+)\neq\varnothing$ or $L(\U^-)\neq\varnothing$, which can be decided by checking nonemptiness twice, once for $\U^{+}$ and once for $\U^-$. Since universality of UCWs can be solved in PSPACE \cite{SVW87}, and the UCWs $\U^{+}$ and $\U^-$ are of size exponential in the size of $\sec$, the induced algorithm is in EXPSPACE.
\end{proof}

Readers familiar with HyperLTL \cite{CFKMRS14, FHLST20}, might notice that $\H$-hiding from an observer that knows $\T$ has the flavor of a hyper property.
Indeed, a different approach to decide whether $\T$ $\H$-hides a secret from an observer that knows $\T$, is by model checking of hyper properties:
Given an LTL formula $\sec$, consider the hyper LTL formula $f=\forall\pi\exists\pi' (\bigwedge_{v\in V} v_\pi\leftrightarrow v_{\pi'})\wedge(\sec(\pi)\leftrightarrow\neg\sec(\pi'))$,
where $\sec(\pi)$ and $\sec(\pi')$ are  the LTL formulas obtained from $\sec$ by replacing each signal $p$ by the path variables $p_\pi$ and $p_{\pi'}$, respectively.
It is easy to see that $\T$ $\H$-hides $\sec$ from an observer that knows $\T$, if and only if $\T\models f$.
Indeed, $\T\models f$ if and only if for every computation $\pi\in L(\T)$, there is a computation $\pi'\in L(\T)$ that agrees with $\pi$ on the visible signals, but does not agree with $\pi$ on the secret $\sec$.
That is, our problem can be reduced to model checking of HyperLTL formula, which is decidable \cite{FHL20}.

The formulation of $\H$-hiding from an observer that knows $\T$ as a HyperLTL formula of the form $\forall \pi \exists \pi' f(\pi,\pi')$ suggests that synthesis with privacy against an observer that knowns the transducer is undecidable, as is synthesis of HyperLTL formulas of this form \cite{CFKMRS14}.

\section{Directions for Future Research}

We suggested a framework for the synthesis of systems that satisfy their specifications while keeping some behaviors secret. Behaviors are kept secret from an observer by hiding the truth value of some input and output signals, subject to budget restrictions: each signal has a hiding cost, and there is a bound on the total hiding cost.
Our framework captures settings in which the choice and cost of hiding are fixed throughout the computation. For example, settings with signals that cannot be hidden (e.g., alarm sound, or the temperature outside), signals that can be hidden throughout the computation with some effort (e.g., hand movement of a robot), or signals that are anyway hidden (e.g., values of internal control variables). Our main technical contribution are lower bounds for the complexity of the different aspects of privacy: the need to choose the hidden signals, and the need to hide the secret behaviors. We show that both aspects involve an exponential blow up in the complexity of synthesis without privacy. 

The exponential lower bounds apply already in the relatively simple cost mechanism we study. Below we discuss  possible extensions of this mechanism. In settings with a {\em dynamic hiding of signals}, we do not fix a set ${\mathcal H} \subseteq I \cup O$ of hidden signals. Instead, the output of the synthesis algorithm contains a transducer that describes not only the assignments to the output signals but also the choice of input and output signals that are hidden in the next cycle of the interaction. Thus, signals may be hidden only in segments of the interaction -- segments that depend on the history of the interaction so far. For example, we may hide information about a string that is being typed only after a request for a password. In addition, the cost function need not be fixed and may depend on the history of the interaction too. For example, hiding the location of a robot may be cheap in certain sections of the warehouse and expensive in others. Solving synthesis with privacy in a setting with such dynamic hiding and pricing of signals involves automata over the alphabet $3^{I \cup O}$, reflecting the ability of signals to get an ``unknown" truth value in parts of the computation. Moreover, as the cost is not known in advance (even when the cost of hiding signals is fixed), several mechanisms for bounding the budget are possible (energy, mean-payoff, etc. \cite{BBFR13,CD10}).

\bibliographystyle{alphaurl}
\bibliography{ok}

\end{document}